\newcommand{\lolli}{\mbox{$-\!\circ$}}
\newcommand{\tensor}{\mathbin{\otimes}}
\newcommand{\calC}{{\cal C}}
\newcommand{\skderiv}{\vdash}
\newcommand{\ljderiv}{\vdash}
\newcommand{\produ}[1]{\langle #1 \rangle}
\newcommand{\subst}[2]{{}^{#1}\!/_{\!#2}}
\newcommand{\FV}{\mathop{FV}}
\newcommand{\down}{\downarrow}
\newcommand{\up}{\uparrow}
\def\bnfas{\mathrel{::=}}
\def\bnfalt{\mid}
\newcommand{\plus}[1]{{#1}^+}
\newcommand{\minus}[1]{{#1}^-}
\newcommand{\posi}[1]{\mathop{pos}(#1)}
\newcommand{\nega}[1]{\mathop{neg}(#1)}
\newcommand{\weak}[1]{\llcorner{#1}\lrcorner}
\newcommand{\aux}[1]{\llcorner{#1}\lrcorner}
\newcounter{beanr}
\newenvironment{enumr}{%%%%%%indented roman labels
\setcounter{beanr}{1}
\begin{list}%
{(\roman{beanr})}{\usecounter{beanr}\labelwidth=6mm\itemindent=0mm\labelsep=2mm\leftmargin=8mm}}{
\end{list}}
\newcommand{\ILL}{\vdash_{\mbox{\upshape \tiny ILL}}}
\newcommand{\R}[1]{\mbox{\upshape #1}}
\newcommand{\fix}[2]{{\bf FIX}\footnote{{\bf #1}: #2}}
\renewcommand{\fix}[2]{}
\begin{document}

%% Title information
\title{Skolemisation for Intuitionistic Linear Logic}      

\author{Alessandro Bruni\inst{1} \and
Eike Ritter\inst{2} \Envelope \and
Carsten Sch\"urmann\inst{1}}

\institute{
   Department of Computer Science \\
   IT University of Copenhagen  \\ 
\texttt{\{brun$|$carsten\}@itu.dk}
  \and
  School of Computer Science \\ 
  University of Birmingham \\ 
 \texttt{E.Ritter@bham.ac.uk}
}
\maketitle              % typeset the header of the contribution

%\begin{center}
%  \today
%\end{center}

\fix{ale}{skolem capital or lowercase?}

\begin{abstract}
Focusing is a known technique for reducing the number of proofs while preserving derivability. Skolemisation is another technique designed to improve proof search, which reduces the number of back-tracking steps by representing dependencies on the term level and instantiate witness terms during unification at the axioms or fail with an occurs-check otherwise. Skolemisation for classical logic is well understood, but a practical skolemisation procedure for focused intuitionistic linear logic has been elusive so far. In this paper we present a focused variant of first-order intuitionistic linear logic together with a sound and complete skolemisation procedure.
\end{abstract}
\section{Introduction}

Modern proof search paradigms are built on variants of focused logics first introduced by Andreoli~\cite{andreoli1992logic}.
Focused logics eliminate sources of non-determinism while preserving derivability.
In this paper we consider the focused logic LJF~\cite{Chuck09tcs1}.
By categorising the logical connectives according to the invertibility of its left or right rules, we obtain a so-called polarised logic~\cite{Chuck09tcs1}.
For example, the $\forall$-right rule is invertible, making $\forall$ a negative (or asynchronous) connective, and the $\exists$-left rule is invertible, making $\exists$ a positive (or synchronous) connective.

But even a focused proof system does not eliminate all non-determinism. There is still residual non-determinism in-between focusing steps. It is well known that we can control this non-determinism using different search strategies, such as forcing backward-chaining and forward-chaining using the atom polarity.
Another remaining source of non-determinism comes from the order of quantifier openings, as choosing the wrong order may lead to additional back-tracking.

For example, consider the following judgment in multiplicative linear logic:
\[\forall x. A(x) \lolli B(x), \forall y. \exists u.A(u) \vdash  \exists z. B(z)\]
%and its focused version assuming that $A$  and $B$ are negative atoms:
%\[\down(\forall x.  (\down A(x)) \lolli B(x)), \down(\forall y. \up \exists u.\down A(u)) \vdash  \up(\exists z. \down B(z))\]
Variables $u$ introduced by the well-known rules $\exists L^u$ and $\forall R^u$ (and written next to the rule name) are fresh and called Eigen-variables, which we can use to construct witness terms for the universal variables on the left or the existential variables on the right.
Because quantifier rules do not permute freely with other rules, one needs to resolve quantifiers in a particular order, or otherwise risk an exponential blow-up in the proof search.
This fact has already been observed by Shankar~\cite{Shankar91proofsearch} for LJ, who proposed to capture the necessary dependencies using Skolem functions to encode the permutation properties of LJ inference rules, guaranteeing reconstruction of LJ proofs from their skolemised counterparts.

However, na\"ive Skolemisation is unsound in linear logic.
As first noted by Lincoln~\cite{lincoln1995deciding}, the sequent
\[ \forall x. A \tensor B(x) \vdash A \tensor \forall u. B(u) \]
does not admit a derivation in linear logic, but its na\"ive skolemisation does: $A \tensor B(x) \vdash A \tensor B(u())$, where $x$ denotes an existential  and $u()$ a universal variable that must not depend on $x$.
%Lincoln showed that the restricted fragment of multiplicative and additive linear logic (MALL) is decidable in NEXPTIME, using an argument based on skolemization.
%\eike{We just said: skolemization does not work. Is this a contradiction?}
Introducing replication creates a similar problem, where the following sequent does not admit a derivation:
\[ \forall x. !A(x) \vdash !\forall u. A(u) \]
however again its na\"ive skolemisation loses the relative order between quantifier openings and replication, thus admitting a proof: $!A(x) \vdash !A(u())$. %\carsten{why do write $u$ as $u()$?  see also example above, where I already removed it.}

In this paper we show that the ideas of skolemisation for classical logic\fix{ale}{both logics?} and intuitionistic logic for LJ~\cite{Shankar91proofsearch} carry over quite naturally to focused intuitionistic linear logics (LJF) \cite{Chuck09tcs1}.
We propose a quantifier-free version of LJF that encodes the necessary constraints called skolemised intuitionistic linear logic (SLJF).
Our main contribution is to define a \emph{skolemisation} procedure
from LJF to SLJF that we show to be both sound and complete: any
derivation in LJF is provable in SLJF after skolemisation and, vice
versa, any derivation in SLJF of a skolemised formula allows to
reconstruct a proof of the original formula.
Hence we eliminate back-tracking points introduced by first-order
quantifiers. We do not eliminate any back-tracking points introduced by
propositional formulae.

The paper proceeds as follows: Section~\ref{sec:ill} introduces focused intuitionistic linear logic (LJF), Section~\ref{sec:skill} presents skolemised focused intuitionistic linear logic (SLJF), Section~\ref{sec:skolem} presents a novel skolemisation procedure, Section~\ref{sec:meta} presents  soundness and completeness results, and Section~\ref{sec:conc} presents our conclusion and related work.

\emph{Contributions:} This work is to our knowledge the first work that successfully defines skolemisation for a variant of linear logic. The benefit is that during proof search any back-tracking caused by resolving quantifiers in the wrong order is eliminated and replaced by an admissibility check on the axioms.

\section{Focused Intuitionistic Linear Logic}\label{sec:ill}

% The main reason to consider  formulations of focused  logic is to remove non-deterministic choices from proof search in the propositional fragment. Differently from classical logic, where skolemization resolves the first-order non-determinism by reducing it to the occurs-check during unification, this paper is the first to introduce such a technique for the first-order fragment of intutionistic linear logic. 

We consider the focused and polarised formulation of linear logic LJF~\cite{Chuck09tcs1} that we now present.  The syntactic categories are defined as usual: we write $u, v$ for Eigen-variables and $x, y$ for existential variables that may be instantiated by other terms, finally $N$ for negative formulas and $P$ for positive formulas. We also distinguish between negative and positive atoms, written as $A^-$ and $A^+$. We write $\uparrow$ to embed a positive formula into a negative, and $\downarrow$ for the inverse.  The rest of the connectives should be  self-explanatory.
\[
  \begin{array}{lrcl}
    \mbox{Atom} & A,B &\bnfas&  q(t_1 \dots t_n) \\
    \mbox{Negative formula} &N& \bnfas & A^- \bnfalt P \lolli N % \bnfalt N_1 \with N_2
           \bnfalt \forall x. N \bnfalt \up P \\
    \mbox{Positive formula} &P& \bnfas & A^+ \bnfalt P_1 \tensor P_2 % \bnfalt P_1 \oplus P_2
                                         \bnfalt !N \bnfalt  \exists x. P \bnfalt \down N \\
  \end{array}
\]

We use the standard two-zone notation for judgments with unrestricted
context $\Gamma$ and linear context $\Delta$: we write $\Gamma; \Delta
\vdash N$ for the judgment, where at most one formula $[N] \in \Delta$
or $N=[P]$ can be in focus\fix{ale}{N both cases is confusing}. All
formulas in $\Gamma$ are negative and all other formulas in $\Delta$
are positive. When $[N] \in \Delta$ we say that we focus on the left,
whereas when $N=[P]$ we focus on the right, and we are in an inversion
phase when no formula is in focus. To improve readability, we omit the
leading $\cdot;$ when the unrestricted context is empty. The rules
defining LJF~\cite{Chuck09tcs1} are depicted in Figure~\ref{fig:LJF}.
We comment on a few interesting aspects of this logic. There are two
axiom rules $\minus{\R{ax}}$ and $\plus{\R{ax}}$ where, intuitively,
$\minus{\R{ax}}$ triggers backwards-chaining, and $\plus{\R{ax}}$
forward-chaining~\cite{Pfenning2010oplss}. Hence we can assign
polarities to atoms to select a particular proof search strategy. Once
we focus on a formula, the focus is preserved until a formula with opposite polarity is encountered, in which case the focus is lost or blurred. After blurring, we enter a maximal inversion phase, where all rules without focus are applied bottom-up until no more invertible rules are applicable. The next focusing phase then commences.

Focusing is both sound and complete. i.e. every derivation (written as $\Gamma;\Delta \ILL F$) can be focused and every focused derivation can be embedded into plain linear logic~\cite{Chuck09tcs1}.  In particular, in our own proofs in Section~\ref{sec:meta}, we make use of the soundness of focusing.
\begin{theorem} [Focusing]
  If $\Gamma; \Delta \ILL F$ and $\Gamma'$, $\Delta'$ and $F'$ are the
  result of polarising $\Gamma$, $\Delta$ and $F$ respectively by
  inserting $\uparrow$ and $\downarrow$ appropriately, then $\Gamma';
  \Delta' \vdash F'$ in focused linear logic~\cite{Chuck09tcs1}.
\end{theorem}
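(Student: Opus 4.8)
The plan is to establish the completeness direction asserted by the statement — every unfocused derivation gives rise to a focused one after polarisation — by reducing it to two structural properties of the focused calculus: \emph{identity expansion} and \emph{admissibility of cut}. Since every focused rule is an instance of an unfocused one, the converse inclusion — that a focused derivation yields an unfocused one, which we rely on in Section~\ref{sec:meta} — is immediate by forgetting the focus annotations and the $\up/\down$ shifts; the content of the present statement therefore lies entirely in passing from an unfocused derivation to a focused one.

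First I would prove identity expansion: for every polarised formula $F'$, the sequent $\cdot; F' \vdash F'$ is derivable in the focused system using only the atomic axioms $\plus{\R{ax}}$ and $\minus{\R{ax}}$. This proceeds by induction on the structure of $F'$, alternating focusing and inversion phases according to the polarity of each connective — for instance focusing on $P \lolli N$ on the left, inverting the residual goal, and appealing to the induction hypotheses for $P$ and for $N$. This lemma guarantees that whenever a subformula appears with a polarity that does not line up with the phase we are currently in, it can be ``re-focused'' against a copy of itself.

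Second, and this is the technical heart, I would prove that the family of cut rules is admissible in the focused calculus. Because the judgment is two-zoned and polarised, the cut principle splits into several forms: a principal cut on a positive cut formula $[P]$, a principal cut on a negative cut formula $[N]$, and cuts that feed a proved formula into the unrestricted zone $\Gamma$, where it may be contracted and weakened freely. These are proved simultaneously by a lexicographic induction whose primary component is the structure of the cut formula and whose secondary components are the heights of the two premise derivations. The key (principal) cases pair a right-focus introduction of a connective against its left-inversion and reduce a cut on a compound formula to cuts on its immediate subformulae; the remaining (commutative) cases permute the cut above a rule acting on a side formula. With cut available as an admissible operation that preserves cut-freeness, I would then prove the theorem by induction on the given unfocused derivation $\Gamma;\Delta \ILL F$: each inference is simulated after polarisation — invertible rules are absorbed into an inversion phase, non-invertible rules are realised by entering a focusing phase on the principal formula, and the $\up/\down$ shifts are discharged using identity expansion together with admissible cut to glue the focused sub-derivations supplied by the induction hypotheses, yielding directly a cut-free focused derivation of $\Gamma';\Delta' \vdash F'$.

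The main obstacle is the cut-admissibility argument, and within it the cases peculiar to linear logic. The multiplicative splitting of the linear context $\Delta$ across the two premises of a cut must be tracked so that no resource is duplicated or lost; the shift connectives $\up$ and $\down$ create boundary cases in which the cut formula changes phase; and the exponential $!N$ forces the cut on an unrestricted hypothesis to interact with contraction, so that a single cut may have to be replayed against several occurrences in $\Gamma$. Designing an induction measure that simultaneously dominates the polarity shifts and the duplication induced by $!$ is the delicate point; once that measure is fixed, the individual reduction steps are routine.
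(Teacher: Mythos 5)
The paper does not prove this theorem itself: it is imported wholesale from Liang and Miller~\cite{Chuck09tcs1}, and the surrounding text only records that focusing is sound and complete so that soundness can be invoked later in Section~\ref{sec:meta}. Your outline --- identity expansion plus admissibility of the polarised cut rules, followed by an induction on the unfocused derivation that simulates each rule and glues the pieces with cut --- is essentially the standard argument used in that cited source, so it matches the (external) proof the paper relies on; no gap.
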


We now present three examples of possible derivations of sequents in LJF.
We will use these examples to illustrate key aspects of our proposed skolemisation.
\begin{figure}[t]
  \centering
  \begin{gather*}
    % ax
    \infer[ax^-]{\Gamma; [A^-] \ljderiv A^-}
    {} \qquad
    \infer[ax^+]{\Gamma; A^+ \ljderiv [A^+]}
    {} \\[1ex]
    % \infer[ax]{\Gamma, A^+; \cdot \ljderiv [A^+]}
    % {} \\[1ex]    
    % 1 L
    % \infer[1L]
    % {\Gamma; \Delta, 1 \ljderiv N'}
    % {\Gamma; \Delta \ljderiv N'} \qquad
    % % 1 R
    % \infer[1R]
    % {\Gamma; \cdot \ljderiv [1]}
    % {} \\[1ex]
    % \forall L
    \infer[\forall L]
    {\Gamma; \Delta, [\forall\ x.\ N] \ljderiv N'}
    {\Gamma; \Delta, [N\{\subst t x\}] \ljderiv N'} \qquad
    % \forall R
    \infer[\forall R^u]
    {\Gamma; \Delta \ljderiv \forall\ u.\ N}
    {\Gamma; \Delta \ljderiv N\{\subst u u\}}\\[1ex]
    % \exists L
    \infer[\exists L^u]
    {\Gamma; \Delta, \exists\ u.\ P \ljderiv N'}
    {\Gamma; \Delta, P\{\subst u u\} \ljderiv N'} \qquad
    % \exists R
    \infer[\exists R]
    {\Gamma; \Delta \ljderiv [\exists\ x.\ P]}
    {\Gamma; \Delta \ljderiv [P\{\subst t x\}]}\\[1ex]
    % \lolli L
    \infer[\lolli L]
    {\Gamma; \Delta_1, \Delta_2, [P \lolli N] \ljderiv N'}
    {\Gamma; \Delta_1 \ljderiv [P] & \Gamma; \Delta_2, [N] \ljderiv N'}\qquad
    % \lolli T
    \infer[\lolli R]
    {\Gamma; \Delta \ljderiv P \lolli N}
    {\Gamma; \Delta, P \ljderiv N}\\[1ex]
    % \tensor L
    \infer[\tensor L]
    {\Gamma; \Delta, P_1 \tensor P_2 \ljderiv N'}
    {\Gamma; \Delta, P_1, P_2 \ljderiv N'} \qquad
    % \tensor R
    \infer[\tensor R]
    {\Gamma; \Delta_1, \Delta_2, \ljderiv [P_1 \tensor P_2]}
    {\Gamma; \Delta_1 \ljderiv [P_1] & \Gamma; \Delta_2 \ljderiv [P_2]}\\[1ex]
    % % \oplus L
    % \infer[\oplus L]
    % {\Gamma; \Delta_1, \Delta_2, P_1 \oplus P_2 \ljderiv N'}
    % {\Gamma; \Delta_1, P_1 \ljderiv N' & \Gamma; \Delta_2, P_2 \ljderiv N'} \qquad
    % % \oplus R
    % \infer[\oplus R_i]
    % {\Gamma; \Delta \ljderiv [P_1 \oplus P_2]}
    % {\Gamma; \Delta \ljderiv [P_i]} \\[1ex]
    % % \with L
    % \infer[\with L_i]
    % {\Gamma; \Delta, [N_1 \with N_2] \ljderiv N'}
    % {\Gamma; \Delta, [N_i] \ljderiv N'} \qquad
    % % \with R
    % \infer[\with R]
    % {\Gamma; \Delta \ljderiv N_1 \with N_2}
    % {\Gamma; \Delta \ljderiv N_1 & \Gamma; \Delta \ljderiv N_2} \\[1ex]
    % weakening L
    \infer[! L]
    {\Gamma; \Delta, !N \ljderiv N'}
    {\Gamma, N; \Delta \ljderiv N'} \qquad
    % weakening R
    \infer[! R]
    {\Gamma; \cdot \ljderiv [!N]}
    {\Gamma; \cdot \ljderiv N}
    \qquad
    % copy
    \infer[copy]
    {\Gamma, N; \Delta \ljderiv N'}
    {\Gamma, N; \Delta, [N] \ljderiv N'} \\[1ex]
    % focus L
    \infer[\mbox{focus}L^*]
    {\Gamma; \Delta, \down N \vdash N'}
    {\Gamma; \Delta, [N] \vdash N'}
    \qquad
    % focus R
    \infer[\mbox{focus}R^*]
    {\Gamma; \Delta \vdash \up P}
    {\Gamma; \Delta \vdash [P]}
    \\[1ex]
    % blur L
    \infer[\mbox{blur} L]
    {\Gamma; \Delta, [\up P] \vdash N'}
    {\Gamma; \Delta, P \vdash N'}
    \qquad
    % blur R
    \infer[\mbox{blur} R]
    {\Gamma; \Delta \vdash [\down N]}
    {\Gamma; \Delta \vdash N}
  \end{gather*}

  \caption{Focused intuitionistic linear logic (LJF)}
  \label{fig:LJF}
\end{figure}

% \newcommand{\plus}[1]{{#1}^+}
% \newcommand{\minus}[1]{{#1}^-}

% We next give two examples that illustrate non-deterministic choices  during proof search in LJF. 

\begin{example} \label{ex:a}
  Consider the motivating formula from the introduction that we would like to derive in LJF, assuming that the term algebra has a term $t_0$.
  \[
    \down(\forall x.  (\down \minus{A(x)}) \lolli \minus{B(x)}), \down(\forall x. \up \exists u.\down \minus{A(u)}) \vdash  \up(\exists x. \down \minus{B(x)})    
  \]
  All formulas are embedded formulas, which means that
  there is a non-deterministic choice to be made, namely on which
  formula to focus next. As this example shows, it is quite important
  to pick the correct formula, otherwise proof search will get stuck
  and back-tracking is required. This observation also holds if we
  determine the instantiation of universal quantifiers on the left
  and existential quantifiers on the right by unification instead of
  choosing suitable terms when applying the $\forall L$ or $\exists R$ rule.

  Focusing on the first assumption before the second will not yield a proof.
  The Eigen-variable that eventually is introduced by the nested existential quantifier inside the second assumption is needed to instantiate the universal quantifier in the first assumption.
  If we start by focusing on the first assumption then none of the subsequent proof states is provable, as the following two proof states  $(\down \minus{A(t_0)}) \lolli \minus{B(t_0)},  \minus{A(t_1)} \vdash  \minus{B(t_0)}$ and  
  $(\down \minus{A(t_0)}) \lolli \minus{B(t_0)},  \minus{A(t_1)} \vdash  \minus{B(t_1)}$ demonstrate.
  Back-tracking becomes inevitable.

  To construct a valid proof we must hence focus on the second assumption before considering the first. The result is a unique and complete proof tree that is depicted in Figure~\ref{fig:a}.
  \begin{figure}
  \[
    \infer*[]
    { \down(\forall x.  (\down \minus{A(x)}) \lolli \minus{B(x)}), \down(\forall x. \up \exists u.\down \minus{A(u)}) \vdash  \up(\exists x. \down \minus{B(x)})}
    { \infer*[]
      {\down(\forall x.  (\down \minus{A(x)}) \lolli \minus{B(x)}), \down \minus{A(u_0)} \vdash  \up(\exists x. \down \minus{B(x)})}
      {\infer*[]
        {\down(\forall x.  (\down \minus{A(x)}) \lolli \minus{B(x)}), \down \minus{A(u_0)} \vdash  \minus{B(u_0)}}
        {\infer[\lolli L]
          {[(\down \minus{A(u_0)}) \lolli \minus{B(u_0)})], \down \minus{A(u_0)} \vdash  \minus{B(u_0)}}
          {
           \infer[\R{blur}R]
           {\down \minus{A(u_0)} \vdash [\down \minus{A(u_0)}]}
           {\infer[\R{focus}L^*]
           {\down \minus{A(u_0)} \vdash \minus{A(u_0)}}
           {\infer[\R{ax}^-]
           {[\minus{A(u_0)}] \vdash \minus{A(u_0)}}
           {}}}
         \quad
           \infer[\R{ax}^-]
           {[\minus{B(u_0)}] \vdash \minus{B(u_0)}}
           {}}
        }
      }
    }
  \]
  \caption{Example~\ref{ex:a}, unique and complete proof}
  \label{fig:a}
  \end{figure}
  % It turns out, that because of focusing, this example has only one unique valid proof, the backwards chaining proof.  
  \hfill $\Box$
\end{example}

\begin{example}\label{ex:b}
  Consider the sequent
 $ \down (\forall x.  \up (\down\minus{A} \tensor \down \minus{B}(x)))
     \vdash \up (\down\minus{A} \tensor \down \forall u. \minus{B}(u))$.
This sequent is not derivable in LJF: note that
$\forall L$ needs to be above the $\forall R$ rule,
 but this step requires that $\tensor R$ is applied first.  However, to
 apply $\tensor R$, we would need to have applied $\tensor L$ first, which
 requires that $\forall L$ is applied first.  This cyclic dependency cannot be resolved.  \hfill $\Box$
\end{example}

\begin{example}\label{ex:c}
  Consider the sequent
$\down \forall x. \up ! \minus{A} (x) \vdash   \up ! \forall u. \minus{A}(u)$.
This sequent is not derivable in LJF either: note that
the $\forall L$-rule needs to be above the $\forall R$ rule, but this
step requires the $!R$ rule to be applied first. However, to apply the
$!R$ rule we would need to apply the $\forall L$ rule first to ensure
that the linear context is empty when we apply the $!R$ rule. This is another cyclic dependency. \hfill $\Box$
\end{example}

Focusing removes sources of non-determinism from the propositional
layer, but not from quantifier instantiation.  In the next section we
present a quantifier-free skolemised logic, SLJF, where quantifier
dependencies are represented through skolemised terms. This way, proof
search no longer needs to back-track on first-order variables, as the
constraints capture all dependencies. Instead, unification at the
axioms will check if the proof is admissible.

\section{Skolemised Focused Intuitionistic Linear Logic} \label{sec:skill}

We begin now with the definition of a skolemised, focused, and polarised intuitionistic linear logic (SLJF), with the following syntactic categories:
\[
  \begin{array}{lrcl}
    \mbox{Atom} & A,B &\bnfas&  q(t_1 \dots t_n) \\
    \mbox{Negative formula} & N &\bnfas& \minus{A}_\Phi
                                         \bnfalt 
                                         P \lolli N  \bnfalt \up P \\
    \mbox{Positive formula} & P &\bnfas& \plus{A}_\Phi
                                         \bnfalt P \tensor P
                                         \bnfalt !_{(a;\Phi; \sigma)} N
                                         \bnfalt \down N \\[1ex]
    \mbox{Variable} & v & \bnfas & x \bnfalt u \bnfalt a \\
    \mbox{Term} & t & \bnfas & v \bnfalt  f(t) \bnfalt (t,
                               \ldots, t)\\
    \mbox{Variable context} & \Phi &\bnfas  & \cdot \bnfalt \Phi, v\\
   \mbox{Modal context} & \Gamma &\bnfas & \cdot \bnfalt
                                                  \Gamma,  (a;\Phi; \sigma):N \\
    \mbox{Linear context} & \Delta &\bnfas & \cdot \bnfalt \Delta, P \\[1ex]
%  \mbox{Signature} & \Sigma &\bnfas & \cdot \bnfalt \Sigma, u \\
  \mbox{Parallel substitution} & \sigma &\bnfas & \cdot \bnfalt
                                                  \sigma, t/x \bnfalt
                                                   \sigma, u(t)/u \bnfalt
                                                  \sigma, t/a 
  \end{array}
\]
Following the definition of LJF, we distinguish between positive and negative formulas and atoms. Backward and forward-chaining strategies are supported in SLJF, as well.
% Furthermore, we observe that each connective among $\{\tensor, \lolli\}$ occurs twice, once in non-annotated and once in annotated form.

SLJF does not define any quantifiers as they are removed by skolemisation (see Section~\ref{sec:skolem}). Yet, dependencies need to be captured in some way. Quantifier rules for $\forall R^u$ and $\exists L^u$ introduce Eigen-variables written as $u$. Quantifier rules for $\forall L$ and $\exists R$ introduce existential variables, which we denote with $x$. And finally other rules, such as $\tensor R$, $\lolli L$, and $! R$ are annotated with \emph{special variables} $a$ capturing the dependencies between rules that do not freely commute. These special variables are crucial during unification at the axiom level to check that the current derivation is admissible. 

% We distinguish three kinds of variables. $x$, called an existential variable subject to instantiation. Eigen-variables are denoted with $u$. A special kind of Eigen-variable is denoted by $a$. Finally, an special variable %\fix{Eike}{Which name do we want to use here?} that is neither an Eigen-variable, i.e. it can be used to build other terms nor an existential variable (subject to instantiation) but introduced during skolemization (which is discussed in Section~\ref{sec:skolem} below) is denoted by $a$. In order not to clutter the presentation, we distinguish the different variables by their names only.

% To see special variables in action, consider the annotations $P \tensor_{a_L,a_R} Q$ that bind two new special variables $a_L, a_R$: $a_L$ occurs fresh in $P$ and $a_R$ in $Q$.  The formula $P \tensor_{a_L,a_R} Q$ represents the rule $\tensor R$, the proof term for $\tensor L$ is an $P \tensor Q$ without annotations, as $\tensor L$ is invertible and does not introduce additional dependencies.  A similar comment holds for $\lolli_{a_L,a_R}$, which corresponds to an application of rule $\lolli L$. In contrast, $\lolli R$ is invertible and its proof term remains unchanged from LJF.

The semantics of the bang connective $!$ in SLJF is more involved than
in LJF because we have to keep track of the variables capturing
dependencies and form closures: One way to define the judgmental
reconstruction of the exponential fragment of SLJF is to introduce a
validity judgment $(a;\Phi;\sigma) : N$, read as $N$ is valid in world
$(a;\Phi;\sigma)$, which leads to a generalised, modal $\Gamma$ that
no longer simply contains negative formulas $N$, but also closures of
additional judgmental information.  The special variable $a$ is the
``name'' of the world in which $N\sigma$ is valid, where all
possible dependencies are summarised by $\Phi$. $\Phi$ consists of
variables, where we assume tacit variable renaming to ensure that no
variable name occurs twice. We write $\aux{\Phi}$ for all existential
and special variables declared in $\Phi$.  In contrast to LJF, atomic
propositions $A^-_\Phi$ and $A^+_\Phi$ are indexed by $\Phi$ capturing
all potential dependencies, which we will inspect in detail in
Definition~\ref{def:adm} where we define \emph{admissibility}, the
central definition of this paper, resolving the non-determinism related
to the order in which  quantifier rules are applied.
The linear context remains unchanged.

%The intuitionistic context $\Gamma$ maintains additional information, for $\alpha$-renaming variables in unrestricted assumptions, which will need to be renamed on demand. 
%The linear context remains nearly unchanged, except that only the annotated version of $\lolli$ can appear in the focus while all the tensors in $\Delta$ are not annotated. In contrast, the right hand side of the derivability judgment either contains an annotated $\tensor$ formula in the focus, or a non-focused and non-annotated $\lolli$-formula. This is guaranteed by the very definition of the skolemisation operation, which we discuss further in the next Section~\ref{sec:skolem}. % on skolemization, we will make precise how these special variable annotations are computed from pure formulas in LJF.

Terms $t$ are constructed from variables (existential, universal, and
special) and function symbols $f$ that are declared in a global
signature $\Sigma \bnfas \cdot \bnfalt \Sigma, f$.  Well-built terms
are characterised by the judgment $\Phi \vdash t$. Substitutions
constructed by unification and communicated through proof search
capture the constraints on the order of application of proof rules,
which guarantee that a proof in SLJF gives rise to a proof in LJF.
%they are all together compatible \carsten{no sure what compatibility
%means here.  Reword!}. Eike: done.
Their definition is straightforward, and the typing rules for
substitutions are depicted in Figure~\ref{fig:subst}. For a
substitution $\sigma$ such that $\sigma \colon \Phi \rightarrow
\Phi'$, we define the domain of $\sigma$ to be $\Phi$ and the
co-domain of $\sigma$ to be $\Phi'$. For any context $\Phi$ and
substitution $\sigma$ with co-domain $\Psi$ we write $\sigma_{\uparrow
  \Phi}$ for the substitution $\sigma$ restricted to $\Phi \cap \Psi$,
i.e.~$v\sigma_{\uparrow \Phi}$ is defined iff $v \in \Phi \cap \Psi$,
and $v\sigma_{\uparrow \Phi} = v\sigma$ in this case. We write $\sigma
\setminus \Phi$ for the substitution $\sigma$ restricted to $\Psi
\setminus \Phi$, i.e.~$v\sigma\setminus\Phi$ is defined iff $v \in
\Psi \setminus \Phi$, and $v\sigma\setminus \Phi = v\sigma$ in this
case.
For any substitution $\sigma$ we define the substitution $\sigma^n$ by
induction over $n$ to be $\sigma^1 = \sigma$, and $v\sigma^{n+1} = (v\sigma^n)\sigma$.

\begin{figure}[t]
  \[
    \begin{array}{cc}
% empty
  \infer[s/\cdot]{\cdot \colon \Phi \rightarrow \cdot}{}  &
% weak variable
\infer[\R{s/existential}]{\sigma, t/x\colon \Phi \rightarrow    \Phi', x}
       {\Phi \vdash t & \sigma \colon \Phi
         \rightarrow \Phi'} \\[2ex]
\infer[\R{s/Eigen}]{\sigma, u(\vec{t})/u \colon \Phi \rightarrow \Phi',
  u}{\Phi \vdash \vec{t} & \sigma \colon \Phi \rightarrow \Phi'} &
\infer[\R{s/special}]{\sigma, t/a\colon \Phi \rightarrow    \Phi', a}
       {\Phi \vdash t & \sigma \colon \Phi
                        \rightarrow \Phi'}
    \end{array}
  \]
   \caption{Typing rules for substitutions}
   \label{fig:subst}
 \end{figure}

 \begin{definition}[Free Variables]
    We define the free variables of a skolemised formula $K$, written $\mathop{FV}(K)$ by
    induction over the structure of formulae by
    \[\begin{array}{rcl}
     
      \FV(\minus{A}_{\Phi}) &=& \FV(\plus{A}_{\Phi}) = \Phi \\      
%      \FV(\minus{q(t_1, \ldots, t_n)}_{\Phi}) &=& \FV(\plus{q(t_1,
%                                                  \ldots,
%                                                  t_n)}_{\Phi}) = \Phi
%\mbox{\carsten{$A_\Phi$ not defined.  Do you mean $q(t_1, \dots
%t_n)$}} Eike: fixed   
      \FV(P_1 \tensor P_2) &=& \FV(P_1) \cup \FV(P_2) \\
      \FV(P \lolli N) &=& \FV(P) \cup \FV(N) \\
      \FV(!_{(a;\Phi; \sigma)}N) &=& \Phi \\
%      \FV((a;\Phi; \sigma)\colon N) &=& \Phi \mbox{\carsten{This
%      doesn't belong here. Remove?}} Eike: done \\
      
      \end{array}\]
\end{definition}
Now we turn to the definition of admissibility, which checks whether
the constraints on the order of $\forall L$-and $\exists R$-rules
(which instantiate quantifiers) and application of non-invertible propositional rules can be satisfied
when re-constructing a LJF-derivation from an SLJF-derivation.
\begin{definition}[Admissibility] \label{def:adm} We say $\sigma$ is
  \emph{admissible} for $\Phi$ if firstly for all existential and special variables $v$ and for all
  $n$, $v$ does not occur in $v\sigma^n$, and secondly for all special variables $a_L$ and
  $a_R$ respectively and for all $n$, if $x\sigma^n$ contains a variable $a_L$ or $a_R$
  for any $x$ in the co-domain of $\sigma$, then the
  % corresponding \carsten{what is the corresponding variables?} Eike:
  % just omitted
  variable $a_R$ or $a_L$ respectively does not occur in $\Phi$.
\end{definition}
The first condition in the definition of admissibility ensures that 
there are no cycles in the dependencies of  $\forall L$-and $\exists R$-rules and
non-invertible propositional rules. The second condition ensures
that for each rule with two premises any Eigen-variable which is
introduced in one branch is not used in the other branch. Examples of
how this definition captures dependency constraints are given below.

Next, we define derivability in SLJF. The derivability judgment uses
a substitution which captures the dependencies between  $\forall L$-and $\exists R$-rules and non-invertible propositional rules.
\begin{definition}[Proof Theory] Let $\Phi$ be a context of variables, $\Gamma$ the modal context (which refined the notion of unrestricted context from earlier in this paper), $\Delta$ the linear context, $P$ a positive and $N$ a negative formula, and $\sigma$ a substitution. We define two mutually dependent judgments $\Gamma;\Delta \skderiv N;\sigma$ and $\Gamma;\Delta \skderiv [P];\sigma$ to characterise derivability in SLJF. The rules defining these judgments are depicted in Figure~\ref{fig:sklogic}.
\end{definition}
The $!R$-rule introduces additional substitutions which capture the
dependency of the $!R$-rule on the  $\forall L$-and $\exists R$-rules
which instantiate the free variables in the judgment. An example of
this rule is given below. The copy-rule performs a renaming of all the
bound variables in $N$. 

% The modal context $\Gamma$ contains a list of validity judgments of the form $(a; \Phi; \sigma): N$, where $a$ is the label of the replication that introduced the formula, $\Phi$ are the free variables in $N$ and $\sigma$ contains the substitutions for strong bound variables arising from skolemization.
%  The label $a$ in the turnstyle corresponds to the last non-invertible rule applied in the proof tree.

\begin{figure}[ht]
\centering
\begin{gather*}
    % Axiom
    \infer[\minus{ax}]{\Gamma; [\minus{A}_{\Phi_1}] \skderiv \minus{B}_{\Phi_2}; \sigma}
    {\minus{A}\sigma = \minus{B}\sigma &  \sigma \mbox{ admissible for
        $\Phi_1, \Phi_2$}} \\[1ex]
     \infer[ax^+]{\Gamma; \plus{A}_{\Phi_1} \skderiv [\plus{B}_{\Phi_2}]; \sigma}
     {\plus{A}\sigma = \plus{B}\sigma & \sigma \mbox{ admissible for
        $\Phi_1, \Phi_2$}} \\[1ex]
%     % \infer[ax]
%     % {\Phi; \Psi; \Gamma, \up A_1^+; \cdot \skderiv_\Sigma [A_2^+]; \sigma}
%     % {\Phi \vdash \sigma & A_1^+\sigma = A_2^+\sigma } \\[1ex]
%     % % 1 L
%     % \infer[1L]{\Phi; \Psi; \Gamma; \Delta, 1 \skderiv_\Sigma N; \sigma}
%     % {\Phi; \Psi; \Gamma; \Delta \skderiv_\Sigma N; \sigma} \quad 
%     % % 1 R
%     % \infer[1R]{\Phi; \Psi; \Gamma; \cdot \skderiv_\Sigma [1]; \sigma}
%     % {\Phi \vdash \sigma} \\[1ex]
% \epsilon_l
     % \infer[\epsilon_L]
     % {\Sigma; \Phi; \Gamma; \Delta, \epsilon_x.P \skderiv_{\cal R} N; \sigma}
     % {\Sigma; \Phi, x; \Gamma; \Delta, P \skderiv_{\cal R} N; \sigma} \qquad
     % \infer[\epsilon_L]
     % {\Sigma; \Phi; \Gamma; \Delta, [\epsilon_x.N] \skderiv_{\cal R} N'; \sigma}
     % {\Sigma; \Phi, x; \Gamma; \Delta, [N] \skderiv_{\cal R} N; \sigma} \\[1ex]
     % % \epsilon_R
     % \infer[\epsilon_R]
     % {\Sigma; \Phi; \Gamma; \Delta \skderiv_{\cal R} \epsilon_x.N; \sigma}
     % {\Sigma; \Phi, x; \Gamma; \Delta, P \skderiv_{\cal R} N; \sigma} \qquad
     % \infer[\epsilon_R]
     % {\Sigma; \Phi; \Gamma; \Delta \skderiv_{\cal R} , [\epsilon_x.P]; \sigma}
     % {\Sigma; \Phi, x; \Gamma; \Delta, \skderiv_{\cal R} [P]; \sigma} \\[1ex]
% \lolli L
     \infer[\lolli L]
     {\Gamma; \Delta_1, \Delta_2, [P \lolli N] \skderiv N';
       \sigma}
     {\Gamma; \Delta_1 \skderiv [P]; \sigma &
      \Gamma; \Delta_2, [N] \skderiv N'; \sigma} \\[1ex]
     % \tensor R
     \infer[\tensor R]
     {\Gamma; \Delta_1, \Delta_2 \skderiv [P_1 \tensor P_2];\sigma }
     {\Gamma; \Delta_1\skderiv [P_1]; \sigma &
      \Gamma; \Delta_2 \skderiv [P_2];\sigma} \\[1ex]
     % \lolli R
     \infer[\lolli R]
     {\Gamma; \Delta \skderiv P \lolli N; \sigma}
     {\Gamma; \Delta, P \skderiv N; \sigma} \qquad
%  \hspace*{-2cm}
     % \tensor L
     \infer[\tensor L]
     {\Gamma; \Delta, P_1 \tensor P_2 \skderiv N; \sigma}
     {\Gamma; \Delta, P_1, P_2 \skderiv N;    \sigma} 
 \\[1ex]
%     % % \oplus L
%     % \infer[\oplus L]
%     % {\Sigma;\Phi; \Psi; \Gamma; \Delta_1, \Delta_2, P_1 \oplus P_2 \skderiv_{\cal R} N; \sigma}
%     % {\Sigma;\Phi; \Psi; \Gamma; \Delta_1, P_1 \skderiv_{\cal R} N; \sigma &
%     %   \Sigma;\Phi; \Psi; \Gamma; \Delta_2, P_2 \skderiv_{\cal R} N; \sigma} \\[1ex]
%     % % \oplus R
%     % \infer[\oplus R_i]
%     % {\Sigma;\Phi; \Psi; \Gamma; \Delta \skderiv_{\cal R} N_1 \oplus N_2; \sigma}
%     % {\Sigma;\Phi; \Psi; \Gamma; \Delta \skderiv_{\cal R} N_i; \sigma} \\[1ex]
%     % % \with L
%     % \infer[\with L_i]
%     % {\Sigma;\Phi; \Psi; \Gamma; \Delta, P_1 \with P_2 \skderiv_{\cal R} N; \sigma}
%     % {\Sigma;\Phi; \Psi; \Gamma; \Delta, P_i \skderiv_{\cal R} N; \sigma} \\[1ex]
%     % % \with R
%     % \infer[\with R]
%     % {\Sigma;\Phi; \Psi; \Gamma; \Delta \skderiv_{\cal R} L_1 \with L_2; \sigma}
%     % {\Sigma;\Phi; \Psi; \Gamma; \Delta \skderiv_{\cal R} L_1; \sigma &
%     %   \Sigma;\Phi; \Psi; \Gamma; \Delta \skderiv_{\cal R} L_2; \sigma} \\[1ex]
     \infer[!L]
     {\Gamma; \Delta, !_{(a;\Phi; \sigma')} N \skderiv N'; \sigma}
     {\Gamma, (a;\Phi; \sigma')\colon N; \Delta \skderiv N'; \sigma} \\[1ex]
%     !R
     \infer[!R]
     {(a_i;\Phi_i;\sigma_i)\colon N_i; \cdot \skderiv [!_{(a,\Phi; \sigma')} N]; \sigma}
     {(a_i;\Phi_i;\sigma_i)\colon N_i; \cdot \skderiv N;
       \sigma, \sigma', (\Phi_1, \ldots, \Phi_n)/a, (\Phi)/a_1,
       \ldots, (\Phi)/a_n}\\[1ex]
%     \fix{}{explain this rule}\\
    % copy
    \infer[copy]{\Gamma, (a;\Phi; \sigma')\colon N; \Delta \skderiv N';
      \sigma}
    {\Gamma,(a;\Phi; \sigma')\colon N; \Delta,
      [N\{\vec{v}'/\vec{v}\}]  \skderiv N';
      \sigma, \sigma'\{\vec{v}'/\vec{v}\}\; \mbox{where $\vec{v} = FV(N)\setminus\Phi$}} \\[1ex]
    % focus L
    \infer[\mbox{focus}L^*]
    {\Gamma; \Delta, \down N \skderiv N'; \sigma}
    {\Gamma; \Delta, [N] \skderiv N'; \sigma}
    \qquad
    % focus R
    \infer[\mbox{focus}R^*]
    {\Gamma; \Delta \skderiv \up P; \sigma}
    {\Gamma; \Delta \skderiv [P]; \sigma}
    \\[1ex]
    % blur L
    \infer[\mbox{blur} L]
    {\Gamma; \Delta, [\up P] \skderiv N'; \sigma}
    {\Gamma; \Delta, P \skderiv N'; \sigma}
    \qquad
    % blur R
    \infer[\mbox{blur} R]
    {\Gamma; \Delta \skderiv [\down N];\sigma}
    {\Gamma; \Delta \skderiv N; \sigma} 
  \end{gather*}
  \caption{Skolemised intuitionistic linear logic}
  \label{fig:sklogic}
\end{figure}

% v(x); x; \down A^- \- [\down A^-]; v(x)/y   v(x); x; \down B^-(x) |- [\down B^-(u(y))]; v(x)/y
% ---------------------------------------------------------------------- * R
% v(x);x; \down A^- , \down B^-(x) |- [ (\down A^- *_y \down B^-(u(y))]; 
% ---------------------------------------------------------------------- \up R
% v(x);x; \down A^- , \down B^-(x) |- \up (\down A^- *_y \down B^-(u(y))
% ------------------------------------------------------------------------- * L
% _;x  (\down A^- *_v(x) \down B^-(x)) |- \up (\down A^- *_y \down B^-(u(y))
% ------------------------------------------- \eps L
% _;_\eps_x (\down A^- *_v(x) \down B^-(x)) |- \up (\down A^- *_y \down B^-(u(y))

\begin{example} \label{ex:e}% CARSTEN
  We give a derivation of the translation of the judgment of Example~\ref{ex:a}
  in skolemised intuitionistic
  linear logic. We omit the modal context 
$\Gamma = \cdot$. Furthermore, let the goal
  of proof search be the following judgment:
   \[\begin{array}{l}
   \cdot;
    \up ( \down A(x_1)^-_{(x_1, a_L)}) \lolli B(x_1)^-_{( x_1, a_R)},
    \up A(u)^-_{(x_2,u)}
      \vdash  B(x_3)^-_{(x_3)}; \sigma
    \end{array}
  \]
  where $\sigma$ must contain the substitution $u(x_2)/u$, which
  arises from skolemisation.
\end{example}
We observe that only focusing rules are applicable. Focusing on $A$  will not succeed, since $A$ was assumed to be a
negative connective, so we focus on the right. Recall, that we will
not be able to remove the non-determinism introduced on the
propositional level. We obtain the derivation in
Figure~\ref{fig:uniqueSKProof}, where $\sigma = \cdot, u/x_1, x_1/x_3,
u(x_2)/u$. This derivation holds because $\sigma$ is admissible for $x_1, a_L,
x_2$ and $x_1, a_R, x_3$. The constraint that the variable $x_2$ can
be instantiated only after the $\forall R$-rule for $u$ has been
applied is captured by the substitution $u(x_2)/u$.
\begin{figure}
\begin{center}
%\resizebox{\textwidth}{!}{
$\begin{array}{l}
    \infer[]
    {\down ( \down A(x_1)^-_{(x_1, a_L)}) \lolli B(x_1)^-_{( x_1, a_R)},
    \down A(u)^-_{(x_2,u)}
      \vdash  B(x_3)^-_{(x_3)};\sigma}
    {
    \infer[]
    {[( \down A(x_1)^-_{(x_1, a_L)}) \lolli B(x_1)^-_{( x_1, a_R)}],
    \down A(u)^-_{(x_2,u)}
      \vdash  B(x_3)^-_{(x_3)};\sigma}
    {
    \infer[]
    {    \down A(u)^-_{(x_2,u)}
      \vdash   [\down A(x_1)^-_{(x_1, a_L)}];\sigma}
    {
        [A(u)^-_{(x_2,u)}]
      \vdash  A(x_1)_{(x_1, a_L)};\sigma
    }
    \quad
    [B(x_1)^-_{( x_1, a_R)}]
      \vdash  B(x_3)^-_{(x_3)};\sigma
    }
    }
  \end{array}$
  %}
\end{center}
\caption{Example~\ref{ex:e}, unique complete proof}
\label{fig:uniqueSKProof}
\end{figure}

\begin{example} \label{ex:admissibility}
  Next, consider the sequent $\down (\forall x. \up(\down \minus{A} \tensor
  \down \minus{B}{x})) \vdash \up (\down \minus{A} \tensor \down \forall
  u.\minus{B}(u))$ from Example~\ref{ex:b}.
  To learn if this sequent is provable, we translate it into
$\down\minus{A}_x \tensor \down \minus{B(x)}_x \skderiv \up (\down
A_{a_L} \tensor \down \minus{B(u)}_{a_R;u})$. The only possible proof
yields an axiom derivation $[\minus{B}_{x}]
\skderiv \minus{B}_{a_R;u}; \cdot, u(a_R)/x$ , which is not
valid, as $\cdot, u/x, u(a_R)/u$ is not admissible for $x,a_L$. More
precisely, the second condition of admissibility is violated. 
\hfill $\Box$
\end{example}

\begin{example} \label{ex:admPling}
  Now, consider the sequent $\down \forall x. \up ! \minus{A} (x) \vdash
  \up ! \forall u. \minus{A}(u)$ from Example~\ref{ex:c}. The
  skolemised sequent is
  $!_{(a;x;\cdot)}\minus{A_{a,x}} \skderiv \up !_{(b;u;u(b)/u)}
  \minus{A}(u)_{u,b})$. The only possible derivation produces the
  substitution $\cdot, u/x, x/b, u(b)/u$, which is not admissible for
  $\cdot, x, u, b, a$. More precisely, the first condition of
  admissibility is violated for the variable $b$. This expresses the
  fact that in any possible LJF-derivation the instantiation of $x$ has
  to happen before the $!R$-rule and the $!R$ -rule has to be applied
  before the instantiation of $x$, which is impossible.
\end{example}

\section{Skolemisation} \label{sec:skolem}

To skolemise first-order formulas in classical logic, we usually compute prenex normal forms of all formulas that occur in a sequent, where we replace all quantifiers that bind ``existential'' variables by Skolem constants. This idea can also be extended to intuitionistic logic~\cite{Shankar91proofsearch}. This paper is to our knowledge the first  to demonstrate that skolemisation can also be defined for focused, polarised, intuitionistic, first-order linear logic, as well.   In this section, we show how.

Skolemisation transforms an LJF formula $F$ (positive or negative) closed under $\Phi$ into an SLJF formula $K$ and a substitution, which
collects all variables introduced during skolemisation. Formally, we define two mutual judgments:  $sk_L(\Phi,F) = (K; \sigma)$ and $sk_R(\Phi,F) = (K; \sigma)$. $K$ is agnostic to polarity information, hence we prepend appropriate $\uparrow$ and $\downarrow$ connectives to convert $K$ to the appropriate polarity by the conversion operations $\posi{\cdot}$ and $\nega{\cdot}$, depicted in Figure~\ref{fig:polad}. Alternatively, we could have chosen to distinguish positive and negative $K$s syntactically, but this would have unnecessarily cluttered the presentation and left unnecessary backtrack points because of spurious $\uparrow \downarrow$  and $\downarrow \uparrow$ conversions.
 
% The function $\posi{\cdot}$  does not change a positive formula but turns a negative formula into a positive one by prefixing it with $\downarrow$, thereby introducing a backtracking point. 
% The function $\nega{\cdot}$  does not change a negative formula but turns a positive formula into a negative one by prefixing it with $\uparrow$, thereby introducing a backtracking point. 

\begin{figure}[t]
  \begin{center}
    \begin{minipage}{2in}
      \[
        \begin{array}{rcl}
          \posi{A^-} & = &  \down A^-{} \\
          \posi{P \lolli N} & = &  \down (P \lolli N) \\
          \posi{\up P} & = & P \\
          \posi{A^+} & = &  A^+ \\
          \posi{\down N} & = & \down N \\
          \posi{!_{(a;\Phi; \sigma)}N} & = & !_{(a;\Phi; \sigma)}N
        \end{array}
      \]
    \end{minipage}
    \begin{minipage}{2in}
      \[
        \begin{array}{rcl}
          \nega{A^-} & = &  A^- \\
          \nega{P \lolli N} & = &  P \lolli N \\
         \nega{\up P} & = & \up P \\
          \nega{A^+} & = &  \up A^+ \\
         \nega{P_1 \tensor P_2} & = & \up (P_1 \tensor P_2) \\
          \nega{\down N} & = & N \\
          \nega{!_{(a;\Phi; \sigma)}N} & = & \up !_{(a;\Phi; \sigma)}N
        \end{array}
      \]
    \end{minipage}
  \end{center}
  \caption{Polarity adjustments}
  \label{fig:polad}
\end{figure}

We return to the definition of skolemisation, depicted in Figure~\ref{fig:sk}.
The main idea behind skolemisation is to record dependencies of
quantifier rules as explicit substitutions. More precisely, if an
Eigen-variable $u$ depends on an existential variable $x$, a substitution
$u(x)/u$ is added during skolemisation. We do not extend the scope of an
Eigen-variable beyond the !-operator as we have to distinguish between
an Eigen-variable for which a new instance must be created by the
\texttt{copy}-rule and one where the same instance may be retained.

Explicit substitutions model constraints on the order of
quantifiers. The satisfiability of the constraints is checked
during unification at the leaves via the admissibility condition (see Definition~\ref{def:adm}) which the
substitution has to satisfy. Potential back-track points are marked by special
variables $a$, which are associated with the
% \fix{ale}{was $\lolli$, $\tensor$ and}
$!$ connective.  These annotations need to store enough information so that the set
of constraints can be appropriately updated when copying a formula
from the modal context into the linear context.

%\carsten{This para seems outdated. Remove?} %Lastly, another noteworthy feature of the rules is the prevention of trivial back-track points.  The rules for skolemizing $\forall$ on the left and $\tensor$ on the right provide two cases each, depending on the polarity of $K$. % Any such backtracking points are introduced by $\downarrow$ for the % translation on the LHS and $\uparrow$ for the translation on the RHS. As an example, the formula  $\down \forall x. \up \exists u.\down A(u)$, is skolemized to $\down A(u)$, thereby eliminating one backtracking point. 

In our representation,  any proof of the skolemised formula in SLJF
captures an equivalence class of proofs under different quantifier
orderings in LJF. Only those derivations where substitutions are
admissible, i.e.~do not give rise to cycles like $u(x)/x$ or introduce undue dependencies between the left and right branches of a $\tensor$ or $\lolli$, imply the existence of a proof in LJF.

The judgments can be easily extended to the case of contexts $\Gamma$ and $\Delta$ for which we write $ sk_L(\Phi;\Gamma)$ and  $ sk_L(\Phi;\Delta)$. Note that tacit variable renaming is in order, to make sure that no spurious cycles are accidentally introduced in the partial order defined by the constraints.

\newcommand{\skl}[2]{\mbox{sk}_L({#1};{#2})}
\newcommand{\skr}[2]{\mbox{sk}_R({#1};{#2})}

\begin{figure}[t]
  \resizebox{\textwidth}{!}{
  \begin{math}\begin{aligned}
    &\skl{\Phi}{A} = \posi{A_\Phi}; \cdot && \skr{\Phi}{A} = \nega{A_\Phi}; \cdot \\[1ex]
    &\skl{\Phi}{\forall x. F} = K; \sigma && \skr{\Phi}{\exists x. F} = K; \sigma\\
    &\qquad\mathsf{where}\ \skl{x,\Phi}{F} = K; \sigma && \qquad\mathsf{where}\ \skr{x, \Phi}{F} = K; \sigma \\[1ex]
    &\skl{\Phi}{\exists u.F} = K; \sigma, u(\weak{\Phi})/u && \skr{\Phi}{\forall u. F} = K; \sigma, u(\weak{\Phi})/u \\
    &\qquad\mathsf{where}\ \skl{u,\Phi}{F} = K; \sigma && \qquad\mathsf{where}\ \skr{u,\Phi}{F} = K; \sigma \\[1ex]
    &\skl{\Phi}{F_1 \tensor F_2} = \posi{K_1} \tensor \posi{K_2}; \sigma_1, \sigma_2 && \skr{\Phi}{F_1 \tensor F_2} = \posi{K_1} \tensor \posi{K_2}; \sigma_1, \sigma_2\\
    &\qquad\mathsf{where}\ \skl{\Phi}{F_1} = K_1; \sigma_1 && \qquad\mathsf{where}\ \skr{\Phi, a_L}{F_1} = K_1; \sigma_1 \\
    &\hspace{5em}\skl{\Phi}{F_2} = K_2;\sigma_2 && \hspace{5em}\skr{\Phi, a_R}{F_2} = K_2; \sigma_2\\[1ex]
    &\skl{\Phi}{F_1 \lolli F_2} = \posi{K_1} \lolli \nega{K_2}; \sigma_1, \sigma_2 && \skr{\Phi}{F_1 \lolli F_2} = \posi{K_1} \lolli \nega{K_2}; \sigma_1, \sigma_2\\
    &\qquad\mathsf{where}\ {\skr{\Phi, a_L}{F_1} = K_1; \sigma_1}&& \qquad\mathsf{where}\ {\skl{\Phi}{F_1} = K_1;\sigma_1}\\
    &\hspace{5em}\skl{\Phi, a_R}{F_2} = K_2; \sigma_2 && \hspace{5em}\skr{\Phi}{F_2} = K_2; \sigma_2\\[1ex]
    &\skl{\Phi}{!F} = !_{(a;\Phi; \sigma \setminus \Phi)} \nega{K}; \sigma_{\uparrow \Phi} &&  \skr{\Phi}{!F} = !_{(a,\Phi; \sigma \setminus \Phi)}\nega{K};\sigma_{\uparrow \Phi} \\
    &\qquad\mathsf{where}\ {\skl{\Phi, a}{F} = K; \sigma} && \qquad\mathsf{where}\ {\skr{\Phi,a}{F} = K; \sigma} \\[1ex]
    &\skl{\Phi}{\downarrow F} = \nega{K};\sigma && \skr{\Phi}{\downarrow F} = \nega{K}; \sigma \\
    &\qquad\mathsf{where}\ {\skl{\Phi}{F} = K; \sigma} && \qquad\mathsf{where}\ {\skr{\Phi}{F} = K; \sigma}\\[1ex]
    &\skl{\Phi}{\uparrow F} = \posi{K}; \sigma && \skr{\Phi}{\uparrow F} = \posi{K}; \sigma \\
    &\qquad\mathsf{where}\ {\skl{\Phi}{F} = K; \sigma} && \qquad\mathsf{where}\ {\skr{\Phi}{F} = K; \sigma}
  \end{aligned}\end{math}}
  \caption{Skolemisation}
  \label{fig:sk}
\end{figure}

\begin{example} \label{ex:d} We return to Example~\ref{ex:a} and simply present the skolemisation of the three formulas that define the judgment:
    \[
    \down(\forall x.  (\down A(x)) \lolli B(x)), \down(\forall x. \up \exists u.\down A(u)) \vdash  \up(\exists x. \down B(x))    
  \]
  First, we skolemise each of the formulas individually.
  \begin{eqnarray*}
    \skl{\cdot}{\down(\forall x.  (\down A(x)) \lolli B(x))} 
    &=& ( \down A(x)_{( x, a_L)}) \lolli
    B(x)_{( x, a_R)}; \cdot\\
    \skl{\cdot}{\down(\forall x. \up \exists u.\down A(u))} &=&
      A(u)_{(x,u)} ; u(x)/u \\
    \skr{\cdot}{\up(\exists x. \down B(x))} &=&  B(x)_{(x)}; \cdot 
  \end{eqnarray*}
  Second, we assemble the results into a judgment in SLJF, which then looks as follows. To  this end, we $\alpha$-convert the variables, 
  \[\begin{array}{l}
    ( \down A(x_1)_{(x_1, a_L)}) \lolli B(x_1)_{( x_1, a_R)},
    A(u)_{(x_2,u)}
      \vdash  B(x_3)_{(x_3)}; u(x_2)/u
    \end{array}
  \]
  The attentive reader might have noticed that we already gave a proof of this judgment in the previous section in Example~\ref{ex:a}, after turning the first two formulas positive, because they constitute the linear context.
\end{example}

\section{Meta Theory} \label{sec:meta}

We begin now with the presentation of the soundness result (see
Section~\ref{sec:sound}) and the completeness result (see
Section~\ref{sec:comp}). Together they imply that skolemisation
preserves provability.  These theorems also imply that proof search in SLJF will be more efficient than in LJF since it avoids quantifier level back-tracking.  Proof search in skolemised form will not miss any solutions.

\subsection{Soundness} \label{sec:sound}
For the soundness direction, we show that any valid derivation in LJF
can be translated into a valid derivation in SLJF after
skolemisation.

\begin{lemma} [Weakening]
  \label{lemma:weakening}
  \begin{enumr}
    \item
  Assume 
$\Gamma; \Delta
\skderiv K;\sigma$
Then also
$\Gamma, (a;\Phi; \sigma') \colon N; \Delta
\skderiv K;\sigma$.
\item
Assume $\Gamma; \Delta
\skderiv [K];\sigma$
Then also
$\Gamma, (a;\Phi; \sigma') \colon N; \Delta
\skderiv [K];\sigma$.
\item
Assume $\Gamma; \Delta, [K']
\skderiv K;\sigma$
Then also
$\Gamma, (a;\Phi; \sigma') \colon N; \Delta, [K']
\skderiv K;\sigma$.
\end{enumr}

%  \carsten{We have several judgments, depending on if we focus or not. We just write out all of these lemmas precisely}
\end{lemma}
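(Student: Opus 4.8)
The plan is to prove all three statements simultaneously by a single induction on the structure of the given SLJF derivation, since the three judgment forms ($\Gamma; \Delta \skderiv K; \sigma$, $\Gamma; \Delta \skderiv [K]; \sigma$, and the focused-left variant $\Gamma; \Delta, [K'] \skderiv K; \sigma$) are mutually recursive through the rules in Figure~\ref{fig:sklogic}. A plain structural induction works here because weakening merely enlarges the modal context $\Gamma$, which in SLJF is never consumed destructively: the modal context propagates unchanged through every linear rule and is only ever \emph{extended} (by $!L$) or \emph{read} (by copy and the axioms).

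First I would fix the new entry $(a;\Phi;\sigma')\colon N$ to be added and proceed by cases on the last rule of the derivation. For the two axiom rules $\minus{ax}$ and $ax^+$, the side conditions ($\minus{A}\sigma = \minus{B}\sigma$ and admissibility of $\sigma$ for the relevant contexts) mention only $\sigma$ and the atom annotations, not $\Gamma$, so they transfer verbatim and the weakened axiom instance is immediate. For the purely linear rules — $\lolli L$, $\tensor R$, $\lolli R$, $\tensor L$, $\mathsf{focus}L^*$, $\mathsf{focus}R^*$, $\mathsf{blur}L$, $\mathsf{blur}R$ — the modal context is identical in premises and conclusion, so I would simply apply the induction hypothesis to each premise (choosing the appropriate clause (i), (ii), or (iii) according to the premise's judgment form) and reassemble with the same rule. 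The $!L$ rule moves a formula into the modal context, turning $\Gamma$ into $\Gamma,(a;\Phi;\sigma')\colon N$ in its premise; weakening commutes with this extension, so applying the induction hypothesis to the premise (with the extra entry present) and then re-applying $!L$ closes the case, using that the order of entries in $\Gamma$ is immaterial.

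The two cases requiring genuine care are copy and $!R$, since these are the only rules that inspect $\Gamma$. For copy, the rule reads an entry already present in $\Gamma$ and performs a renaming $N\{\vec v'/\vec v\}$; adding an unrelated entry does not affect the selected entry nor the freshness of the renaming (tacit variable renaming keeps the new entry's variables disjoint), so the induction hypothesis on the premise suffices and the side condition $\vec v = FV(N)\setminus\Phi$ is unchanged. The main obstacle, as I expect it, is the $!R$ rule: its schematic form fixes the entire modal context to be of the shape $(a_i;\Phi_i;\sigma_i)\colon N_i$, and its premise carries the elaborate substitution $\sigma,\sigma',(\Phi_1,\ldots,\Phi_n)/a,(\Phi)/a_1,\ldots,(\Phi)/a_n$ that explicitly enumerates all $n$ modal entries. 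Weakening increases $n$ to $n+1$, so the premise's substitution genuinely changes, and I cannot merely invoke the induction hypothesis unchanged. I would handle this by observing that the added entry contributes one further component $(\Phi)/a_{n+1}$ of the same schematic form and one further $\Phi_{n+1}$ to the tuple $(\Phi_1,\ldots,\Phi_{n+1})/a$; I then need a small auxiliary observation that the SLJF derivation of the premise is stable under adding such a fresh component to $\sigma$ (the added special variable $a_{n+1}$ is fresh and appears in no atom annotation of the existing derivation, so admissibility at the leaves is preserved). This reduces the $!R$ case back to the induction hypothesis applied to the premise with the enlarged modal context, after which re-applying $!R$ gives the conclusion.
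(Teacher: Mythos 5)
Your proposal is correct and takes the same route as the paper, which disposes of all three statements with a one-line ``simple induction over the derivation''; your case analysis simply spells out what that induction has to check. Your observation that the $!R$ case is not a verbatim application of the induction hypothesis --- because the premise's substitution explicitly enumerates the modal entries and must grow by $(\Phi)/a_{n+1}$ and an extended binding for $a$ when the context is weakened --- is a genuine subtlety that the paper's proof elides, and your freshness argument for why admissibility at the leaves survives is the right way to discharge it.
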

\begin{proof}
The proof is a simple induction over derivation in all three cases.
\end{proof}

Next, we prove three admissibility properties for $\tensor$R, $\lolli$L, and $copy$, respectively, that we will invoke from within the proof of the soundness theorem. In the interest of space, we provide a proof only for the first of the three lemmas.

\begin{lemma}[Admissibility of $\tensor$R]
\label{lemma:tensor}
Assume $\Gamma; \Delta_1 \skderiv \nega{K_1};
  \sigma$ and $\Gamma; \Delta_2
  \skderiv \nega{K_2};\sigma$ with proofs of
  height at most $n$ such that the first application of the focus-rule 
  is the focus R-rule. Then also
  $\Gamma; \Delta_1 ,\Delta_2 \skderiv
  \nega{\posi{K_1} \tensor \posi{K_2}}; \sigma$. 
\end{lemma}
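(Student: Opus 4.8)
The plan is to strip the goal down to its two focused halves and re-assemble them with a single $\tensor R$ step. Since the polarity adjustment gives $\nega{\posi{K_1} \tensor \posi{K_2}} = \up(\posi{K_1} \tensor \posi{K_2})$, one application of $\R{focus}R^*$ reduces the goal to $\Gamma; \Delta_1, \Delta_2 \skderiv [\posi{K_1} \tensor \posi{K_2}]; \sigma$, and one $\tensor R$ step reduces this to the two right-focused sequents $\Gamma; \Delta_1 \skderiv [\posi{K_1}]; \sigma$ and $\Gamma; \Delta_2 \skderiv [\posi{K_2}]; \sigma$. Crucially, $\tensor R$ splits only the linear context and keeps $\sigma$ fixed on both premises, which is exactly what the two hypotheses supply. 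Everything therefore reduces to a single \emph{focus-conversion} step: from a derivation of $\Gamma; \Delta_i \skderiv \nega{K_i}; \sigma$ whose first focus rule is $\R{focus}R^*$, construct a derivation of $\Gamma; \Delta_i \skderiv [\posi{K_i}]; \sigma$.

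I would prove the focus-conversion step by a case analysis resting on the following observation about the polarity adjustments: for every shape of $K_i$ one has \emph{either} $\nega{K_i} = \up \posi{K_i}$ \emph{or} $\posi{K_i} = \down \nega{K_i}$. In the second case the given derivation already establishes $\Gamma; \Delta_i \skderiv \nega{K_i}; \sigma$, so a single $\R{blur}R$ step yields $\Gamma; \Delta_i \skderiv [\down \nega{K_i}]; \sigma$, which equals $\Gamma; \Delta_i \skderiv [\posi{K_i}]; \sigma$; here neither induction nor the focus hypothesis is needed. The only degenerate subcase is $K_i = \minus{A}$, where $\nega{K_i}$ is a bare negative atom; such a succedent can only be reached by first focusing on the left, so under our hypothesis no such derivation exists and the subcase is vacuous.

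In the first case, $\nega{K_i} = \up \posi{K_i}$, I would argue by induction on the height $n$ of the given derivation, inspecting its bottom rule. If the bottom rule is a left-inversion rule, $\tensor L$ or $!L$ acting on $\Delta_i$, I apply the induction hypothesis to the shorter premise — whose first focus rule is still $\R{focus}R^*$, since only a non-focus rule was removed — and then re-apply the same left rule above the now right-focused succedent $[\posi{K_i}]$, leaving $\sigma$ unchanged. Once no left-inversion remains, the hypothesis forces the bottom rule to be $\R{focus}R^*$, whose premise is precisely $\Gamma; \Delta_i \skderiv [\posi{K_i}]; \sigma$. The hypothesis is exactly what rules out the two competing focus openings, $\R{focus}L^*$ and $copy$, that would otherwise begin a left focus; the synchronous rules ($\tensor R$, $\lolli L$, $!R$, the axioms) cannot occur at the bottom because the conclusion carries no focus bracket, and $\lolli R$ cannot occur because the succedent $\up \posi{K_i}$ is not an implication.

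The hard part will be this inductive first case: one has to check that $\tensor L$ and $!L$ genuinely permute above a right-focused succedent without disturbing $\sigma$ or the admissibility side-conditions at the leaves, and that deleting such a rule from the bottom preserves the invariant that the first focus rule is $\R{focus}R^*$, so that the induction hypothesis applies. Granting the focus-conversion step, the closing $\tensor R$ and $\R{focus}R^*$ are immediate and increase the proof height by two.
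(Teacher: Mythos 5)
There is a genuine gap in the ``focus-conversion'' step, and it is exactly the point where the order of rules matters. In SLJF the left-inversion rules have conclusions with an \emph{unfocused} succedent: $\tensor L$ concludes $\Gamma; \Delta, P_1 \tensor P_2 \skderiv N; \sigma$ and $!L$ concludes $\Gamma; \Delta, !_{(a;\Phi;\sigma')}N \skderiv N'; \sigma$. So after you have applied the induction hypothesis and obtained $\Gamma; \Delta_i' \skderiv [\posi{K_i}]; \sigma$, there is no rule that lets you ``re-apply the same left rule above the now right-focused succedent'': a sequent of the form $\Gamma;\Delta,[\ldots]$ or $\Gamma;\Delta\skderiv[P]$ admits no $\tensor L$ or $!L$ inference. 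The same problem appears one level down: $\R{focus}R^*$ is a starred rule, applicable only once the inversion phase is maximal, so it cannot sit at the very bottom of the combined derivation while $\Delta_1,\Delta_2$ still contains a non-atomic positive formula. The repair is to place the re-applied inversions \emph{below} $\R{focus}R^*$ and $\tensor R$, acting on the combined context $\Delta_1,\Delta_2$ — which forces the induction to be run on the full statement of the lemma (producing $\Gamma; \Delta_1', \Delta_2 \skderiv \nega{\posi{K_1}\tensor\posi{K_2}};\sigma$ from the IH and then one left rule) rather than on a per-branch conversion. This is precisely how the paper's proof is organised.

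A second, smaller omission follows from the same restructuring: in the $!L$ case the formula $(a;\Phi;\sigma')\col N$ migrates into the modal context of the branch it came from, so before the induction hypothesis can be applied to the \emph{other} branch you must weaken its derivation to the enlarged $\Gamma$; the paper invokes Lemma~\ref{lemma:weakening} for this, and your branch-local scheme never confronts it. Your base case (the dichotomy $\nega{K_i}=\up\posi{K_i}$ versus $\posi{K_i}=\down\nega{K_i}$, with $\R{blur}R$ in the second case and the vacuity of the bare negative atom) is fine and matches the paper's second group of cases.
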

\begin{proof}
We prove this property by induction over $n$. 
  There are several cases. Firstly, assume that there is any
  positive formula in $\Delta_1$ or $\Delta_2$ which is not an
  atom. Again, there are several cases. We start by 
  assuming $\Delta_1 = K_1' \tensor K_2', \Delta_1'$ and the derivation
  is
\[
\infer{\Gamma; K_1' \tensor K_2', \Delta_1' \skderiv
  \nega{K_1};  \sigma}
   {\Gamma; K_1',K_2', \Delta_1' \skderiv
  \nega{K_1};  \sigma}
\]
 Hence by
induction hypothesis we have 
$\Gamma; K_1', K_2', \Delta_1', \Delta_2 \skderiv
  \nega{\posi{K_1} \tensor \posi{K_2}}; \sigma
$
and hence also 
$\Gamma; K_1' \tensor K_2', \Delta_1', \Delta_2 \skderiv
  \nega{\posi{K_1} \tensor \posi{K_2}};  \sigma \; .
$
Now assume that $\Delta_1 = !_{(a;\Phi; \sigma')}N, \Delta_1'$ and the
derivation is
\[
\infer{\Gamma; !_{(a;\Phi; \sigma')}N, \Delta_1' \skderiv
  \nega{K_1};  \sigma}
   {\Gamma, (a;\Phi; \sigma') \colon N; \Delta_1' \skderiv
  \nega{K_1};  \sigma}
\]
By Lemma~\ref{lemma:weakening}, we also have $\Gamma, (a;\Phi; \sigma') \colon N; \Delta_2
  \skderiv \nega{K_2};\sigma$.
By induction hypothesis we have
$\Gamma, (a;\Phi; \sigma') \colon N; \Delta_1', \Delta_2 \skderiv
  \nega{\posi{K_1} \tensor \posi{K_2}};  \sigma \; 
$
and hence also
$\Gamma; !_{(a;\Phi; \sigma')}N,  \Delta_1', \Delta_2 \skderiv
  \nega{\posi{K_1} \tensor \posi{K_2}};  \sigma \; . 
$

Secondly, assume that $K_1= N_1$ , where $N_1$ is a negative formula
and $K_2= P_2$, where $P_2$ is a positive formula.
By assumption there is a derivation
\[
\infer{\Gamma; \Delta_2 \skderiv \up P_2;
  \sigma}
{\Gamma, \Delta_2 \skderiv  [P_2]
  \sigma}
\]
There is also a derivation
\[
\infer{\Gamma; \Delta_1 \skderiv [\down N_1];
         \sigma}
       {\Gamma; \Delta_1 \skderiv N_1;
         \sigma} 
\]
Hence we also have the following derivation:
\[
\infer{\Gamma; \Delta_1, \Delta_2 \skderiv \up (\down N_1
  \tensor P_2); \sigma}
      {\infer{\Gamma; \Delta_1, \Delta_2 \skderiv  [\down N_1
  \tensor P_2]; \sigma}
             {\infer{\Gamma; \Delta_1 \skderiv  [\down N_1];\sigma}
                     {\Gamma; \Delta_1 \skderiv  N_1;
               \sigma}\;\;
              \Gamma; \Delta_2 \skderiv [P_2];
               \sigma}}
         \]
By assumption we obtain $\Gamma; \Delta_1, \Delta_2 \skderiv \up (\down N_1
\tensor P_2); \sigma$.
All other cases of $K_1$ and $K_2$ being positive or negative are similar.
\end{proof}

\begin{lemma}[Admissibility of $\lolli$L]
\label{lemma:lolli}
Assume $$\Gamma; \Delta_1 \skderiv \nega{K_1};
  \sigma \;\; \mbox{and} \;\;\Gamma; \Delta_2, \posi{K_2}
  \skderiv K;\sigma$$ with proofs of
  height at most $n$ such that the first application of the focus-rule 
  is the focus L-rule for $K_1$. and the focus R-rule for $K_2$. Then also
  \[\Gamma; \Delta_1 ,\Delta_2  , \nega{\posi{K_1}
    \lolli\posi{K_2}}\skderiv K
; \sigma\]
\end{lemma}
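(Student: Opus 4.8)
The plan is to mirror the proof of Lemma~\ref{lemma:tensor} almost verbatim, inducting on the height bound $n$ and reconstructing the derived $\lolli L$-rule from its two blurred premises. The essential difference is that the two premises now play asymmetric roles: the first premise supplies the antecedent and must be turned into a right-focused derivation of $[\posi{K_1}]$, whereas the second supplies the consequent and must be turned into a left-focused derivation with $[\nega{K_2}]$ in focus. Since $\lolli L$ keeps the linear zones $\Delta_1$ and $\Delta_2$ separate, each side can be manipulated independently.

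First I would dispatch the inversion cases exactly as in Lemma~\ref{lemma:tensor}. If $\Delta_1$ contains a non-atomic positive formula, the derivation of $\Gamma;\Delta_1 \skderiv \nega{K_1};\sigma$ begins (reading upwards) with the matching invertible left rule $\tensor L$ or $!L$; I peel it off, apply the induction hypothesis to the shorter derivation together with the unchanged second premise, and re-apply the rule to the result. The $!L$ case additionally moves a formula into the modal context, so I first extend the second premise by Lemma~\ref{lemma:weakening}, then invoke the induction hypothesis, then re-apply $!L$. The symmetric situation where $\Delta_2$ carries a non-atomic positive is handled the same way, peeling from the second premise and weakening the first. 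In every inversion case the first focusing step is untouched, so the focus-conditions demanded by the induction hypothesis still hold for the sub-derivations.

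Next comes the base case, where $\Delta_1,\Delta_2$ contain only atoms and each premise is ready to focus. Here I case-split on whether $K_1$ and $K_2$ are positive or negative SLJF formulas, reusing the polarity-adjustment computations already performed in Lemma~\ref{lemma:tensor}. On the antecedent side the bridging is identical to $\tensor R$: I convert $\Gamma;\Delta_1 \skderiv \nega{K_1};\sigma$ into a right-focused derivation of $[\posi{K_1}]$, either by prepending a blur-right step or, when the focus-condition already supplies a right-focusing step, by extracting that sub-derivation. On the consequent side the conversion is dual and acts on the left: I turn the blurred $\posi{K_2}$ in $\Delta_2$ into a left-focused $[\nega{K_2}]$, either by a blur-left step or by extracting the left-focused sub-derivation licensed by the focus-condition. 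A single $\lolli L$-step then joins the right-focused $[\posi{K_1}]$ with the left-focused $[\nega{K_2}]$, and a concluding focus-left step places the shifted implication into the linear context. Since none of these rules alters $\sigma$, the admissibility obligations ride along unchanged.

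The main obstacle is exactly this asymmetric polarity bookkeeping: unlike $\tensor R$, where both premises are reoriented by right-hand rules, here the antecedent is reoriented on the right and the consequent on the left, so I must check that the two stated focus-conditions are precisely strong enough to extract a right-focused $[\posi{K_1}]$ and a left-focused $[\nega{K_2}]$ in the sub-cases where a fresh blur does not suffice. Once this correspondence between the $\posi{\cdot}$ and $\nega{\cdot}$ adjustments and the focus/blur rules is pinned down, the inversion cases become mechanical, and the height bound is respected because every appeal to the induction hypothesis is made to a strictly shorter derivation.
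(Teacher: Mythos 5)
Your proof is correct and takes essentially the same approach as the paper, which disposes of this lemma with the single line ``Similar to the proof of Lemma~\ref{lemma:tensor}''; your elaboration of the inversion cases (peeling $\tensor L$/$!L$ with Lemma~\ref{lemma:weakening}) and of the asymmetric blur/focus bookkeeping needed to obtain a right-focused $[\posi{K_1}]$ and a left-focused $[\nega{K_2}]$ before the $\lolli L$ step is exactly the intended adaptation. Note only that the lemma's stated focus-conditions appear to have ``L'' and ``R'' interchanged relative to what the construction actually requires; your reading resolves this in the correct direction.
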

\begin{proof}
  Similar to the proof of Lemma~\ref{lemma:tensor}.
\hfill $\Box$
\end{proof}
\begin{lemma}[Admissibility of $copy$]
\label{lemma:copy}
Assume $$\Gamma, (a;\Phi; \sigma')\colon N;
\posi{N\{\vec{v}'/\vec{v}\}}, \Delta \skderiv \nega{K}; 
  \sigma, \sigma'\{\vec{v}'/\vec{v}\}$$ with a proof of
  height at most $n$ such that the first application of the focus-rule 
  is the focus L-rule applied to $\posi{N\{\vec{v}'/\vec{v}\}}$. Then also
  $\Gamma, (a;\Phi; \sigma')\colon N; \Delta \skderiv
  \nega{K} ; \sigma$.
\end{lemma}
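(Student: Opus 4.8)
The plan is to follow the template of the proof of Lemma~\ref{lemma:tensor} and argue by induction on the height $n$ of the given derivation $D$, commuting the invertible rules that lie below the first focusing step downwards while inserting a single copy-rule at the root. The key observation is that, by hypothesis, the first focus-rule in $D$ is the focus-$L$ rule applied to $\posi{N\{\vec{v}'/\vec{v}\}}$. Since $N$ lives in the modal context it is negative, and since focus-$L$ is applicable to it we must have $\posi{N\{\vec{v}'/\vec{v}\}} = \down N\{\vec{v}'/\vec{v}\}$, which is stable (the remaining possibility $N=\up P$ would make $\posi{N}=P$ non-focusable and contradict the hypothesis). Consequently the root of $D$ is either an invertible rule acting on $\Delta$ or on $\nega{K}$, or it is exactly that focus-$L$ step.

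For the inductive cases I would peel off the bottom-most invertible rule and re-apply it after invoking the induction hypothesis. If the root is $\tensor L$ on some $P_1\tensor P_2\in\Delta$, the premise derives the same sequent with $P_1,P_2$ replacing $P_1\tensor P_2$ and the copied formula still present, so the induction hypothesis followed by a fresh $\tensor L$ gives the claim. If the root is $!L$ on some bang formula in $\Delta$, the premise merely enlarges the modal context while leaving the closure $(a;\Phi;\sigma')\colon N$ untouched, so the induction hypothesis applies with the enlarged context and we re-apply $!L$. Right-invertible steps on $\nega{K}$, such as $\lolli R$ moving the antecedent $P$ of $\nega{K}=P\lolli N''$ into $\Delta$ and leaving $N''$ on the right, are handled identically. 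In each case the invertible rule leaves the substitution unchanged, and the premise still has the focus-$L$ rule on the copied formula as its first focus-rule, so the induction hypothesis is genuinely available.

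For the base case, suppose the root of $D$ is the focus-$L$ step itself. Its immediate premise is $\Gamma,(a;\Phi;\sigma')\colon N; [N\{\vec{v}'/\vec{v}\}],\Delta \skderiv \nega{K}; \sigma,\sigma'\{\vec{v}'/\vec{v}\}$. This is precisely the premise of the copy-rule whose conclusion is the desired $\Gamma,(a;\Phi;\sigma')\colon N;\Delta\skderiv\nega{K};\sigma$: the side condition $\vec{v}=\FV(N)\setminus\Phi$ and the renaming $\vec{v}'/\vec{v}$ are the ones recorded in the statement, and the extra component $\sigma'\{\vec{v}'/\vec{v}\}$ supplied by copy is exactly the one already present in the premise. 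Hence I discard the focus-$L$ node and close with one application of copy over the sub-derivation of $D$.

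The hard part is not any deep argument but the bookkeeping that makes the commutation sound. I would check carefully that the invertible rules below the focusing step neither touch the copied closure $(a;\Phi;\sigma')\colon N$ nor modify the ambient substitution, so that the copy-rule's side condition remains satisfied after rebuilding $\Delta$. Moreover, since copy alters only the transfer of a formula from the modal to the linear context and never the substitution applied at the leaves, the admissibility of $\sigma,\sigma'\{\vec{v}'/\vec{v}\}$ for the leaf contexts is the same whether the formula enters the derivation via focus-$L$, as in the hypothesis, or via copy, as in the conclusion. Thus no admissibility obligation at the axioms has to be re-verified, and the transformation is purely structural.
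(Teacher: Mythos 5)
Your proof is correct and follows essentially the same route the paper intends, namely the template of Lemma~\ref{lemma:tensor}: induction on the height $n$, permuting the invertible rules ($\tensor L$, $!L$, $\lolli R$) that occur below the first focusing step, and in the base case replacing the focus-$L$ node on $\posi{N\{\vec{v}'/\vec{v}\}} = \down N\{\vec{v}'/\vec{v}\}$ by one application of the $copy$-rule, whose premise coincides with that of the discarded focus-$L$ step. Your additional observations --- that the hypothesis rules out $N = \up P$, and that the substitution seen at the axioms (and hence admissibility) is unchanged by the transformation --- are exactly the bookkeeping the paper leaves implicit.
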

\begin{proof}
  Similar to the proof of Lemma~\ref{lemma:tensor}.
\hfill $\Box$
\end{proof}
\begin{theorem}[Soundness]
Let $\Phi$ be a context which contains all the free
  variables of $\Gamma$, $\Delta$ and $F$. Let $\sigma\colon \Phi \rightarrow \Phi $ be a
  substitution. Assume  $\Gamma\sigma_{\uparrow \weak{\Phi}};
  \Delta\sigma_{\uparrow \weak{\Phi}} \vdash F\sigma_{\uparrow \weak{\Phi}}$ in focused intuitionistic linear logic. 
Let
  $sk_L(\Phi; \Gamma) = \Gamma'; \sigma_{\Gamma'}$, $sk_L(\Phi;\Delta)
  = \Delta'; \sigma_{\Delta'}$ and $sk_R(\Phi; F)=
  K; \sigma_K$.
  Let $\tau = \sigma_{\Gamma'}, \sigma_{\Delta'}, \sigma_K$.
Let $\Phi' = (FV(\Gamma') \cup FV(\Delta') \cup FV(\Phi_F)) \setminus \Phi$.
Assume that $\sigma$ does not contain any bound variables of $\Gamma$,
$\Gamma'$, $\Delta$, $\Delta'$, $F$ or $K$. Moreover, 
assume whenever $\Phi$ contains a variable $a_L$ or $a_R$, then the
corresponding variable $a_R$ or $a_L$ respectively does not occur in
$\Phi$.
Then there exists a substitution $\sigma' \colon \Phi, \Phi'
\rightarrow \Phi'$ such that
\[
\nega{\Gamma'}; \Delta' \skderiv K;\sigma, \tau,\sigma'\; .
\]
\end{theorem}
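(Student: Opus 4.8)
The plan is to prove the Soundness theorem by induction on the given LJF derivation of $\Gamma\sigma_{\uparrow \weak{\Phi}}; \Delta\sigma_{\uparrow \weak{\Phi}} \vdash F\sigma_{\uparrow \weak{\Phi}}$, proceeding by case analysis on the last rule applied. For each LJF rule I would construct a corresponding SLJF derivation of the skolemised sequent, assembling the witness substitution $\sigma'$ as I descend through the proof. The key conceptual point is that the concrete term instantiations chosen by the $\forall L$ and $\exists R$ rules in the LJF derivation are exactly the data recorded in $\sigma'$: whenever LJF picks a term $t$ to instantiate an existential variable $x$, I record $t/x$ in $\sigma'$, and I must check that the resulting substitution is admissible for the relevant $\Phi$'s so that the axiom rules of SLJF apply.

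\textbf{Key steps in order.}
First I would handle the axiom cases ($\minus{\R{ax}}$ and $\plus{\R{ax}}$): here the LJF side gives $\minus{A}\sigma_{\uparrow\weak{\Phi}} = \minus{B}\sigma_{\uparrow\weak{\Phi}}$ after instantiation, and I must verify that the composite substitution $\sigma, \tau, \sigma'$ is admissible for the two index contexts $\Phi_1, \Phi_2$ attached to the atoms by skolemisation. This is where the hypothesis ``whenever $\Phi$ contains $a_L$ or $a_R$ the corresponding $a_R$ or $a_L$ does not occur in $\Phi$'' is consumed, feeding directly into the second admissibility condition of Definition~\ref{def:adm}. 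Second, I would treat the invertible propositional rules ($\lolli R$, $\tensor L$, $!L$, the focus/blur rules), which translate essentially one-for-one via the skolemisation clauses in Figure~\ref{fig:sk} and the polarity adjustments of Figure~\ref{fig:polad}; these produce no new substitution content. Third, the non-invertible binary rules $\tensor R$ and $\lolli L$ are handled by appealing to the admissibility Lemmas~\ref{lemma:tensor} and~\ref{lemma:lolli}, which let me merge two independent sub-derivations while respecting the $a_L/a_R$ separation that skolemisation introduced on the two branches. Fourth, the $copy$ rule is discharged using Lemma~\ref{lemma:copy} together with the renaming $\vec{v}'/\vec{v}$ recorded in the copy-rule of Figure~\ref{fig:sklogic}. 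Fifth, the $!R$ case is the most delicate: I would show that the dependency-closure substitution $\sigma', (\Phi_1,\ldots,\Phi_n)/a, (\Phi)/a_1, \ldots, (\Phi)/a_n$ attached to the SLJF $!R$-rule correctly records that the $!R$ step in LJF could only fire once the linear context was emptied, i.e.\ after all relevant quantifier instantiations.

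\textbf{Quantifier cases and the main obstacle.}
The heart of the argument is the $\forall L$ and $\exists R$ cases. For $\forall L$ instantiating $\forall x. N$ with term $t$ in LJF, skolemisation has replaced $x$ by a free existential variable and indexed the residual atoms with $x$ (and the ambient special variables); I extend $\sigma'$ by $t/x$ and must argue that $x$ does not occur in $t$ after iterating the substitution, giving the first (occurs-check) admissibility condition. The $\exists L^u$ and $\forall R^u$ cases introduce an Eigen-variable $u$ whose skolemisation clause added $u(\weak{\Phi})/u$; here I must check that the Skolem term $u(\weak{\Phi})$ faithfully records the dependency of $u$ on precisely the existential and special variables available at that point. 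The main obstacle I expect is the bookkeeping for the second admissibility condition under composition: I must ensure that extending $\sigma'$ branch-by-branch never lets an Eigen-variable introduced in one premise of a $\tensor R$ or $\lolli L$ leak into the sibling premise's index context, and that iterating $\sigma^n$ preserves admissibility rather than creating a spurious cycle. Managing the interaction between the restriction operators $\sigma_{\uparrow \Phi}$ and $\sigma \setminus \Phi$ in the $!R$ and axiom cases — and confirming that the co-domain typing $\sigma' \colon \Phi, \Phi' \rightarrow \Phi'$ is maintained throughout — is the part that will require the most care, and is where I would spend the bulk of the detailed verification.
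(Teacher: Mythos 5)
Your proposal follows essentially the same route as the paper's proof: induction over the LJF derivation, with the axiom case discharged by the admissibility condition (consuming the $a_L$/$a_R$ hypothesis), $\tensor R$, $\lolli L$ and $copy$ handled by Lemmas~\ref{lemma:tensor}, \ref{lemma:lolli} and~\ref{lemma:copy}, the quantifier rules handled by extending the substitution with $t/x$ (resp.\ the Eigen-variable context) and re-checking admissibility, and the remaining cases passing through directly. Your treatment of the $!R$ case and the occurs-check bookkeeping is if anything more explicit than the paper's, which dismisses those cases as immediate.
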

\begin{proof}
  Induction over the derivation of $\Gamma\sigma_{\uparrow \weak{\Phi}};
  \Delta\sigma_{\uparrow \weak{\Phi}} \vdash F\sigma_{\uparrow \weak{\Phi}}$.  The axiom case follows from
  the definition of admissibility, $\tensor$R follows from 
  Lemma~\ref{lemma:tensor}, and $\lolli$L from
  Lemma~\ref{lemma:lolli}.
  Now we consider the case of $\forall$L. By definition, $\skl{\Phi}
    {\forall x.F} = \skl{(x, \Phi)}{F}$. Moreover, $t$ contains only
    variables in $\Phi$. Hence we can apply the induction hypothesis
    with replacing $\Phi$ by $\Phi, x$.
    The next case is $\forall$R. Consider any formula $\forall
    u.F$. Skolemisation introduces another Eigen-variable $u$.
    Hence we can apply the induction hypothesis with
    replacing $\Phi$ by $\Phi, u$.
The case for $copy$ is a direct consequence of
  Lemma~\ref{lemma:copy}. All other cases are immediate. \hfill $\Box$ 
\end{proof}

\subsection{Completeness} \label{sec:comp}
We now prove the completeness direction of skolemisation, which means
that we can turn a proof in SLJF directly into a proof in LJF, by
inserting at appropriate places quantifier rules, as captured by the
constraints.
We introduce an order relation to capture constraints on the order of
rules in the proof.
\begin{definition}
For any substitution $\sigma$,  define an order $<$ by $x < u$ or $x < a$ if $a$ or $u$ occur in
 $x\sigma$, and $u < x$ or $u < a$ if the variable $x$ or $a$ occurs
 in $u(z_1, \ldots, z_n)$.
\end{definition}

\begin{lemma} [Strengthening]
  \label{lemma:Strengthening}
  \begin{enumr}
\item Assume $\Gamma, (a';\Phi; \sigma'):K; \Delta_1 \skderiv
K'; \sigma$ and there exists a free variable $x$ in $K$ such that $a_R$ occurs
in $x\sigma$. Moreover assume that $a_L$ occurs in every axiom of
$K'$. Then also $\Gamma; \Delta_1 \skderiv
K' \sigma$.
\item
  Assume $\Gamma, (a';\Phi; \sigma'):K; \Delta_2 \skderiv
K'; \sigma$ and there exists a free variable $x$
in $K$ such that $a_L$ occurs in $x\sigma$. Then also $\Gamma; \Delta_2 \skderiv
K'; \sigma$.
\end{enumr}
\end{lemma}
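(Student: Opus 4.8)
The plan is to prove both parts by structural induction on the given SLJF derivation, carrying out part~(i) in detail and obtaining part~(ii) as its mirror image. Strengthening is the converse of Lemma~\ref{lemma:weakening}, so it can hold only because the distinguished modal assumption $(a';\Phi;\sigma'):K$ is in fact never \emph{used} in the derivation. The only rule that consumes a modal assumption is \emph{copy}, so the heart of the argument is to show that \emph{copy} is never applied to $(a';\Phi;\sigma'):K$; once this is established, the assumption can simply be erased throughout the tree and each rule re-applied over the strengthened context.

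For the routine cases I would argue as follows. Every rule of Figure~\ref{fig:sklogic} except \emph{copy}, $!R$, and the axioms threads the modal context unchanged into its premises, so the induction hypothesis applies to each premise and we re-apply the same rule; this covers $\lolli R$, $\tensor L$, both focus and both blur rules, the $!L$ rule (which merely extends $\Gamma$ with a further assumption that the hypothesis preserves), and \emph{copy} applied to an assumption other than the distinguished one. The axiom rules $\minus{ax}$ and $ax^+$ do not inspect $\Gamma$, so dropping the assumption leaves the side conditions $\minus{A}\sigma = \minus{B}\sigma$ and admissibility for the index contexts intact. For the branching rules $\tensor R$ and $\lolli L$ I restrict the hypothesis ``$a_L$ occurs in every axiom of $K'$'' to each premise, since every axiom leaf lies in exactly one premise, and then apply the induction hypothesis to both. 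The $!R$ case is the only routine case needing substitution bookkeeping: if the distinguished assumption is one of the $N_i$, removing it forces us to drop the component $(\Phi)/a_i$ from the premise substitution before re-applying $!R$ with the smaller modal context.

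The crux is the \emph{copy} case applied to $(a';\Phi;\sigma'):K$ itself, which I claim is impossible under the hypotheses of part~(i). Suppose it were applied. Then a fresh copy $K\{\vec v'/\vec v\}$ enters the linear context in focus and is eventually decomposed. Since the $a_R$-dependency $a_R \in x\sigma$ is a cross-branch constraint, the witnessing variable $x$ is one of the global dependencies recorded in $\Phi$, hence not among the locals $\FV(K)\setminus\Phi$ renamed by \emph{copy}; consequently every atom produced from this copy carries $x$ in its index. Such an atom must ultimately be discharged at an axiom, whose combined index $\Phi_1,\Phi_2$ therefore contains $x$. By hypothesis $a_L$ occurs in that axiom, so $a_L \in \Phi_1,\Phi_2$; but $x$ lies in the co-domain of $\sigma$ and $x\sigma$ contains $a_R$, so the second clause of admissibility (Definition~\ref{def:adm}) forces $a_L \notin \Phi_1,\Phi_2$ — a contradiction. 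Hence \emph{copy} never touches the distinguished assumption, and the induction goes through.

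The main obstacle I anticipate is precisely the bookkeeping in this crux: tracking the distinguished variable $x$ and its $a_R$-dependency through the renaming $\{\vec v'/\vec v\}$ and through the growth of the substitution caused by intervening \emph{copy} and $!R$ steps, so that $a_R \in x\sigma^n$ still holds at the axiom where admissibility is checked. Part~(ii) is obtained by exchanging the roles of $a_L$ and $a_R$ and of $\Delta_1$ and $\Delta_2$; there the analogue of the axiom hypothesis need not be assumed, because the right-branch goal $K'$ is produced by skolemisation in a context already containing $a_R$, so every atom of $K'$ carries $a_R$ by construction and the same admissibility clash arises automatically.
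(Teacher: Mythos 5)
Your proof takes essentially the same route as the paper's: both reduce the lemma to the observation that the \emph{copy} rule can never act on the distinguished closure $(a';\Phi;\sigma')\colon K$, because an atom descended from such a copy would carry the witness $x$ into an axiom whose index also contains $a_L$, contradicting the second admissibility clause given that $a_R$ occurs in $x\sigma$; once the assumption is known to be unused, it is erased throughout the derivation. Your version merely makes explicit the induction scaffolding and the renaming bookkeeping that the paper's one-paragraph argument leaves implicit.
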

\begin{proof}
%  \begin{enumr}
%    \item
(i) If the copy-rule for $K$ is applied during the derivation, the linear context
 contains the free variable $x$ such that $a_R$ occurs in
 $x\sigma$. As $a_L$ occurs in all atoms of $K'$, the variable $x$
 must not occur in any of the linear formulae in the  axioms in
 the derivation of $\Gamma, (a';\Phi; \sigma'):K; \Delta_1 \skderiv
K'; \sigma$ because of the admissibility
 condition. Hence no subformula of $K$ can occur in the linear
 formulae in the axioms in this derivation either. 
Hence there is also a
 derivation of $\Gamma; \Delta_1 \skderiv
 K'; \sigma$, which does not involve $K$.
%\item
(ii)
  A similar argument applies. \hfill $\Box$
%\end{enumr}
\end{proof}
\begin{lemma}
  \label{lemma:tensorCompl}
  Assume $\Gamma; \Delta_1, \Delta_2 \skderiv \up(K_1 \tensor K_2);\sigma$ .  Furthermore assume that each formula $K$ in
  $\Delta_1$ and $\Delta_2$ is either a
  formula $\down K'$, or there exists a free existential variable $x$ in $K$
  such that $a_L$ or $a_R$ occurs in $x\sigma$, where $a_L$ and $a_R$
  are the special variables introduced by the skolemisation of $K_1
  \tensor K_2$.
  Moreover assume that the first
  focusing rule applied is the focus R-rule.
Then $\Gamma; \Delta_1 \skderiv
K_1; \sigma$ and $\Gamma; \Delta_2 \skderiv
K_2; \sigma$.
\end{lemma}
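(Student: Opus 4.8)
The plan is to prove Lemma~\ref{lemma:tensorCompl} by induction on the height of the derivation of $\Gamma; \Delta_1, \Delta_2 \skderiv \up(K_1 \tensor K_2);\sigma$, mirroring the structure of the soundness lemma for $\tensor$R (Lemma~\ref{lemma:tensor}) but running in the opposite direction. The hypothesis that the first focusing rule applied is focus~R means that any inversion phase on the left must be completed before we focus on the right. I would first peel off the outermost rules of the given derivation: since the right-hand side is $\up(K_1\tensor K_2)$, the derivation must begin (reading bottom-up) with a sequence of left-inversion steps ($\tensor$L and $!$L) that decompose the positive formulae in $\Delta_1,\Delta_2$, followed by \R{focus}R$^*$ applied to $[K_1\tensor K_2]$, and then the $\tensor$R rule splitting the context into $\Delta_1',\Delta_2'$ with $\Gamma;\Delta_i'\skderiv[K_i];\sigma$ in the two branches. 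The core of the argument is to show that this split $\Delta_1',\Delta_2'$ (after the inversion steps are re-distributed) coincides with the original $\Delta_1,\Delta_2$, so that the two premises give exactly $\Gamma;\Delta_1\skderiv K_1;\sigma$ and $\Gamma;\Delta_2\skderiv K_2;\sigma$ after re-applying \R{focus}R and \R{blur}R as needed.

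The main work, and the place where the side conditions are used, is in re-distributing the inversion steps. First I would handle the inversion cases as in Lemma~\ref{lemma:tensor}: if $\Delta_1$ (say) contains a compound positive formula $K_1'\tensor K_2'$ or a bang $!_{(a;\Phi;\sigma')}N$, I peel it off, apply the induction hypothesis to the smaller derivation, and re-apply $\tensor$L or $!$L; in the $!$L case I invoke Weakening (Lemma~\ref{lemma:weakening}) to carry the modal assumption across into the correct branch. The hypothesis that every $K$ in $\Delta_1,\Delta_2$ is either already of the form $\down K'$ or carries a free existential variable $x$ with $a_L$ or $a_R$ occurring in $x\sigma$ is what guarantees that each linear formula is forced into the \emph{correct} branch of the $\tensor$R split: a formula tagged (via its existential variable) with $a_L$ cannot be used in the $K_2$-branch without violating the second admissibility condition, since $a_R$ appears in $\Phi_2$ (and symmetrically for $a_R$). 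This is precisely the information supplied by skolemisation of $K_1\tensor K_2$, where $\skr{\Phi,a_L}{F_1}$ and $\skr{\Phi,a_R}{F_2}$ introduce the distinct special variables.

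I therefore expect the main obstacle to be the \emph{base} of the induction, namely showing that once all compound positive formulae have been decomposed, the remaining atomic linear context is partitioned correctly by the $\tensor$R rule. Here one must argue that no atom of $\Delta_1$ can end up contributing to the derivation of $[K_2]$ and vice versa. The key is that the $\tensor$R premises carry the \emph{same} substitution $\sigma$, and admissibility of $\sigma$ for the indices appearing in the two branches forbids any cross-branch use of a formula carrying the opposing special variable; a formula of the pure form $\down K'$ with no such tag must instead be consumed by whichever branch actually focuses on it, and the hypothesis on the first focusing rule together with linearity pins this down uniquely. I would make this precise by observing that if a formula were assigned to the wrong branch, the resulting axiom leaf would unify an atom indexed by a context containing $a_L$ against one whose derivation forces $a_R$ into play, contradicting the second clause of Definition~\ref{def:adm}. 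Once correct partitioning is established, stripping the final \R{focus}R$^*$ and \R{blur}R from each branch yields the two desired judgments $\Gamma;\Delta_1\skderiv K_1;\sigma$ and $\Gamma;\Delta_2\skderiv K_2;\sigma$, completing the induction.
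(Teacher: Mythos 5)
Your overall strategy coincides with the paper's: an induction that peels off the left-inversion steps ($\tensor L$ and $!L$) one at a time, applies the induction hypothesis, and uses the admissibility condition on $\sigma$ together with the special variables $a_L$, $a_R$ to argue that the $\tensor R$ split of the linear context is forced to coincide with $\Delta_1,\Delta_2$. Your partitioning argument in the $\tensor L$ case and at the base is essentially the paper's ("because $\sigma$ is admissible for the branch containing $a_L$-tagged indices, the pieces $K_1',K_2'$ must land in $\Delta_1$").

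There is, however, one genuine gap, in the $!$ case. After you apply $!L$ bottom-up and invoke the induction hypothesis, \emph{both} resulting premises carry the extra modal assumption $(a';\Phi;\sigma')\colon N$ in $\Gamma$. In the branch to which the $!$-formula belongs you can reinstate it with $!L$, but in the \emph{other} branch you are left with a spurious modal assumption that must be \emph{removed} to obtain the stated conclusion $\Gamma;\Delta_2\skderiv K_2;\sigma$. Lemma~\ref{lemma:weakening} only lets you \emph{add} modal assumptions, so invoking it here goes in the wrong direction; you appear to have imported the $!$ case from the soundness Lemma~\ref{lemma:tensor}, where one combines two given derivations and Weakening is indeed the right tool. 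The paper instead proves and uses a dedicated Strengthening lemma (Lemma~\ref{lemma:Strengthening}), whose content is not free: one must show, using the admissibility condition, that the copy-rule for that closure can never fire in the branch tagged with the opposing special variable, so that the assumption can be discarded without affecting the derivation. Without this (or an equivalent argument), your induction does not close in the $!$ case.
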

\begin{proof}
  We use an induction over the structure of $\Delta_1$ and $\Delta_2$.
  Firstly, consider the case  $\Gamma; K_1' \tensor K_2', \Delta_1, \Delta_2 \skderiv \up(K_1 \tensor K_2);\sigma$.
  We have a derivation
  \[
    \infer{\Gamma; K_1' \tensor K_2', \Delta_1, \Delta_2 \skderiv \up(K_1 \tensor K_2);\sigma}
       {\Gamma; K_1', K_2', \Delta_1, \Delta_2 \skderiv\up(K_1 \tensor K_2)'\sigma}
  \]
By induction hypothesis we have $\Gamma; \Delta_1'; \skderiv
K_1; \sigma$ and $\Gamma; \Delta_2' \skderiv
K_2; \sigma$. Assume $a_L$ occurs in $x\sigma$.
Because $\sigma$ is
admissible for $\Gamma; \Delta_2'$, $K_1'$ and $K_2'$ must be part of
$\Delta_1'$. Hence $\Delta_1' = K_1', K_2', \Delta_1$ and $\Delta_2' =
\Delta_2$. An application of the $\tensor L$-rule now produces $
 \Gamma; K_1' \tensor K_2', \Delta_1; \skderiv K_1; \sigma$.

Next we consider the case $ \Gamma;
!_{(a'\Phi; \sigma')}K, \Delta_1, \Delta_2 \skderiv
\up(K_1 \tensor K_2); \sigma$. Assume without loss of
generality $a_R$ occurs in $x\sigma$. We have a derivation
  \[
    \infer{\Gamma; !_{(a';\Phi; \sigma')}K, \Delta_1, \Delta_2 \skderiv
      \up(K_1 \tensor K_2); \sigma}
       {\Gamma, (a';\Phi;\sigma'):K; \Delta_1, \Delta_2 \skderiv\up(K_1
         \tensor K_2); \sigma}
  \]
By induction hypothesis we have $\Gamma, (a';\Phi; \sigma'):K; \Delta_1; \skderiv 
K_1; \sigma$ and $\Gamma, (a';\Phi; \sigma'):K; \Delta_2 \skderiv
K_2; \sigma$.
An application of the $!L$-rule yields $\Gamma; !_{(a';\Phi; \sigma')}K,
\Delta_1; \skderiv K_1; \sigma$
and Lemma~\ref{lemma:Strengthening} yields $\Gamma; \Delta_2
\skderiv K_2; \sigma$. \hfill $\Box$
\end{proof}

\begin{lemma}
  \label{lemma:lolliCompl}
  Assume $\Gamma; \Delta_1, \Delta_2, \down(K_1 \lolli K_2)\skderiv K; \sigma$ .  Furthermore assume that each formula
  $K'$ in $\Delta_1$, $\Delta_2$ and $K$ is either a
  formula $\down K''$, or there exists a free existential variable $x$ in $K'$
  such that $a_L$ or $a_R$ occurs in $x\sigma$. Moreover assume that the first
  focusing rule applied is the focus L-rule for $K_1 \lolli K_2$.
Then $\Gamma; \Delta_1 \skderiv
K_1; \sigma$ and $\Gamma; \Delta_2, K_2 \skderiv
K; \sigma$.
\end{lemma}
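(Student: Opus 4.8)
The plan is to mirror the proof of Lemma~\ref{lemma:tensorCompl}, with the concluding $\tensor R$-decomposition replaced by a $\lolli L$-decomposition. I would argue by induction over the structure of the linear contexts $\Delta_1$ and $\Delta_2$, peeling off every invertible left rule ($\tensor L$ and $!L$) until the assumed focus~$L$ step on $\down(K_1 \lolli K_2)$ is reached.

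First I would dispatch the inductive cases. If some formula in $\Delta_1$ or $\Delta_2$ is a tensor $K_1' \tensor K_2'$, the derivation opens with $\tensor L$; I apply the induction hypothesis to the premise carrying $K_1', K_2'$ in place of $K_1' \tensor K_2'$ and then reintroduce $\tensor L$ on whichever side the formula belongs. If the formula is instead a bang $!_{(a';\Phi;\sigma')}K'$, the derivation opens with $!L$, moving $K'$ into $\Gamma$; here I carry the modal assumption through with Lemma~\ref{lemma:weakening}, apply the induction hypothesis, reintroduce $!L$ on one side, and invoke Lemma~\ref{lemma:Strengthening} to delete the now-idle modal formula from the other side, exactly as in the bang case of Lemma~\ref{lemma:tensorCompl}.

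The base case is reached once every formula of $\Delta_1, \Delta_2$ is of the form $\down K''$ or an atom meeting the stated special-variable condition, so that no invertible left rule applies. Since the first focusing rule is focus~$L$ on $\down(K_1 \lolli K_2)$, the derivation continues with focus~$L$ to $[K_1 \lolli K_2]$ followed by $\lolli L$, which splits the remaining context into some $\Delta_a, \Delta_b$ with $\Gamma; \Delta_a \skderiv [K_1]; \sigma$ and $\Gamma; \Delta_b, [K_2] \skderiv K; \sigma$. Because $K_1$ is positive and $K_2$ negative, these two premises are already the judgments claimed by the lemma, provided the split agrees with the intended one, i.e.~$\Delta_a = \Delta_1$ and $\Delta_b = \Delta_2$.

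The hard part is exactly this splitting argument, which I would settle with admissibility in the style of Lemma~\ref{lemma:tensorCompl}. Skolemising $K_1 \lolli K_2$ attaches the special variable $a_L$ to the antecedent $K_1$ and $a_R$ to the succedent $K_2$, so $a_L$ occurs in the left premise through $[K_1]$ and $a_R$ in the right premise through $[K_2]$. By the second condition of Definition~\ref{def:adm}, any linear formula whose free existential variable $x$ satisfies $a_R \in x\sigma$ cannot appear in the left premise, since then $a_L$ would occur in the variable context of the same branch, contradicting admissibility; symmetrically, an $a_L$-tagged formula cannot appear in the right premise. This pins each non-$\down$ formula onto its mandated side, while Lemma~\ref{lemma:Strengthening} absorbs the remaining $\down$-formulas and any leftover modal assumptions. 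I expect the bookkeeping of which $\down$-formulas may legitimately float to either side, and the verification that this reassignment never disturbs admissibility on either branch, to be the most delicate part of the argument.
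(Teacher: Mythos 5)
Your proposal is correct and follows essentially the same route as the paper, whose proof of this lemma is literally ``similar to the proof of Lemma~\ref{lemma:tensorCompl}'': induction on the linear contexts peeling off $\tensor L$ and $!L$, the admissibility condition on $a_L$/$a_R$ forcing the context split at the $\lolli L$ step, and Lemma~\ref{lemma:Strengthening} discarding the idle modal assumption. The only quibble is that the appeal to Lemma~\ref{lemma:weakening} in the bang case is unnecessary here, since the induction hypothesis is applied to the premise of $!L$, where the modal assumption is already present in $\Gamma$ on both branches.
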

\begin{proof}
  Similar to the proof of Lemma~\ref{lemma:tensorCompl}. \hfill $\Box$
\end{proof}

\begin{lemma}
  \label{lemma:plingR}
  Assume $\Gamma; \Delta \vdash \downarrow !_{(a, \phi; \sigma')} K; \sigma$ and
  the first occurrence of the focus-rule is the focus R-rule followed
  by $!R$ with $\Gamma'$ containing the side formulae. Let $x$ be a
  free variable $x$ of $\Gamma$, $\Delta$ or $!_{(a, \phi; \sigma')} K$.
  \begin{enumr}
    \item If the variable $u$ occurs in $x\sigma$, then $u$ is a free
      variable of $\Gamma'$ or $!_{(a, \phi; \sigma')} K$.
      \item The variable $a$ does not occur in $x\sigma$.
  \end{enumr}
\end{lemma}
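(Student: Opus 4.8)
The plan is to exploit the rigid shape that the hypothesis forces on the derivation and then to track, binding by binding, which variables can appear on the right-hand sides of $\sigma$. First I would record the shape. Since the first focus rule is focus$R$ and it is immediately followed by $!R$, no focus$L$ can occur beneath the $!R$ step, so the phase below $!R$ is a pure left-inversion phase built only from $\tensor L$ and $!L$; the $!R$ rule additionally demands an empty linear context, so every formula of $\Delta$ must be a $!$-formula (possibly under $\tensor$s), and inversion moves all of them into the modal context, producing exactly the $\Gamma'$ of the hypothesis. Hence at the point where $!R$ fires we are in a sequent $\Gamma'; \cdot \skderiv [!_{(a,\phi;\sigma')}K]; \sigma$ with $\Gamma' = (a_i;\Phi_i;\sigma_i)\colon N_i$, and its premise extends the running substitution by $\sigma'$, by $(\Phi_1,\ldots,\Phi_n)/a$, and by $(\Phi)/a_1,\ldots,(\Phi)/a_n$. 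I would then enumerate the only sources of bindings in the global $\sigma$: the term unifiers produced at the $\minus{ax}$ and $\plus{ax}$ leaves, the structural extensions $(\Phi_1,\ldots,\Phi_n)/a$ and $(\Phi)/a_j$ contributed by $!R$ steps, and the closure extensions $\sigma'\{\vec v'/\vec v\}$ contributed by copy steps. The essential observation, read off from Figure~\ref{fig:sk}, is that a special variable is introduced by skolemisation purely as an index decorating atoms and never as a term argument, and that by $\skr{\Phi}{!F}$ the name $a$ is chosen fresh and does not belong to $\phi$.

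For part (ii) I would argue that $a$ never occurs on the right-hand side of any binding reachable from an outer variable. The unifiers bind variables to terms, and since $a$ is never a term argument they cannot introduce it; the extensions for this very $!$-step put $a$ on the left-hand side, while their right-hand sides are the contexts $\Phi_i$ (the outer contexts of the $N_i$, which predate this world) and $\phi$, none of which contains $a$ by freshness; the extensions contributed by other $!$-steps and the copy extensions likewise carry only their own outer material. Consequently, applying $\sigma$ to any variable that is free in $\Gamma$, $\Delta$ or $!_{(a,\phi;\sigma')}K$ can never expose $a$, which is exactly the claim.

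For part (i) the argument is finer, because Eigen-variables genuinely do occur as term arguments, as $u(\weak{\Phi_u})$. Here I would use the scoping discipline of skolemisation — that an Eigen-variable's scope is not extended beyond the $!$-operator — to split the Eigen-variables of the derivation into those introduced strictly inside the body, whose skolem term $u(\weak{\Phi_u})$ satisfies $a \in \Phi_u$, and those \emph{visible at the boundary}, i.e.\ free in $\Gamma'$ or in $!_{(a,\phi;\sigma')}K$. I would then show, by induction on the premise derivation of $\Gamma'; \cdot \skderiv K; \cdot$, that an interior Eigen-variable can only be bound into another interior variable and never into an outer existential $x$: were $u$ interior and occurring in $x\sigma$, we would obtain $x<u$ together with $u<a$ (since $a$ occurs in $u(\weak{\Phi_u})$), and composing this with the $!R$ extensions $(\Phi_1,\ldots,\Phi_n)/a$ and $(\Phi)/a_j$ would be incompatible with the acyclicity required by the first clause of admissibility (Definition~\ref{def:adm}); the strengthening behaviour already isolated in Lemma~\ref{lemma:Strengthening} supplies the bookkeeping that keeps interior formulae out of the axioms where $x$ is resolved. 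Hence any $u$ that does reach $x\sigma$ is a boundary variable, which is the statement of (i).

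I expect the main obstacle to be precisely this leakage argument for (i). Because $\sigma$ is a single global substitution fixed by unification across the entire proof, ruling out that an outer existential picks up an interior Eigen-variable cannot be done by local rule inspection; it requires carefully combining the syntactic scoping of skolem terms with the acyclicity constraint of admissibility. A secondary nuisance will be the name-disjointness bookkeeping: one must make the tacit $\alpha$-renaming conventions explicit so that the fresh $a$ and the interior Eigen-variables stay disjoint across the inversion phase and across any nested $!$-steps, otherwise the freshness used in both parts is not available.
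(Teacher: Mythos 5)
Your argument for part (ii) rests on a false premise. You claim, ``read off from Figure~\ref{fig:sk}'', that a special variable only ever occurs as an index decorating atoms and never as a term argument, and you conclude that no unifier can introduce $a$ into $x\sigma$. But the skolemisation clauses for the quantifiers that introduce Eigen-variables emit bindings $u(\weak{\Phi})/u$, and $\weak{\Phi}$ is by definition the list of existential \emph{and special} variables declared in $\Phi$. Hence every Eigen-variable $u$ skolemised inside the body of the $!$ carries $a$ (and possibly the $a_i$) among the arguments of its skolem term, and unification at an axiom can bind an outer existential $x$ to such a term --- Example~\ref{ex:admissibility} exhibits exactly the binding $u(a_R)/x$. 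This is precisely the one route by which $a$ can reach $x\sigma$, and it is the case your proof of (ii) does not cover. The paper closes it by \emph{reducing (ii) to (i)}: if $x < a$, then $a$ must occur inside some skolem term in $x\sigma$, so there is a $u$ with $x<u$ and $u<a$; but $u<a$ forces $u$ to be bound under the $!$ in $K$, contradicting (i). Your (ii) needs to be rebuilt on top of (i) rather than argued independently from a (false) syntactic invariant.

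Your part (i) is in the right spirit and essentially matches the paper: the chain $x<u$ together with $u<a$ (or $u<a_1$ when $u$ is bound in a closure of $\Gamma$), composed with the relations $a_1<a$ and $a<x$ (respectively $a_1<x$) contributed by the $!R$ extensions $(\Phi_1,\ldots,\Phi_n)/a$ and $(\Phi)/a_i$, yields a cycle violating the first clause of admissibility (Definition~\ref{def:adm}). However, you leave this as a plan --- deferring the ``leakage'' step to an unspecified induction on the premise derivation and to Lemma~\ref{lemma:Strengthening}, which is not needed here. The paper's proof is shorter and more direct: a case split on whether $u$ is bound in $\Gamma$ or in $K$, each resolved by exhibiting the cycle above, plus the observation that the induction over the rules applied before focus$R$ has a trivial step case because conclusions have fewer free variables than premises. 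You should carry out that case analysis explicitly instead of gesturing at it as the ``main obstacle''.
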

\begin{proof}
%  \begin{enumr}
%    \item 
     (i) By induction over the number of steps before application of
  the focus R-rule. Assume that the first rule applied is the
  focus R-rule.
There are several cases. Firstly, assume $u$ occurs bound in $\Gamma$. We consider here only the case that $u$ occurs in
      $(a_1, \Phi_1, \sigma_1):N_1$, which is part of $\Gamma$; all
      other cases are similar. By assumption we have $u < a_1$ and $x
      < u$. The $!R$-rule implies $a_1 < a$. If $x$ occurs freely in
      $\Gamma$, we also have $a < x $ via the $!R$-rule, which is a
      contradiction. If $x$ occurs freely in $K$, then we also have
      $a_1 < x$ via the $!R$-rule, which is a contradiction.
      Secondly, assume $u$ occurs bound in $K$. Hence
      $x$ cannot be a free variable of $K$. In this
      case we have $u < a$ and $x < u$ by assumption, together with $a
      <x$  by the $!R$-rule, which is a contradiction.
      The step case is true because there are fewer free variables in
      the conclusion of a rule than in the premises.
    
    (ii) Assume $x <  a$. Then there must exist a $u$ such that $x <u$ and
     $u < a$. The latter implies $u$ is a bound variable in $K$, which
     is a contradiction to $(i)$.
%    \end{enumr}
  
\end{proof}

\begin{theorem}[Completeness]
Let $\Phi$ be a set 
  of Eigen-, special, and existential  variables which contains all the free
  variables of $\Gamma$, $\Delta$ and $F$. 
 Let $\sigma\colon \Phi \rightarrow \Phi$ be a
  substitution. 
Let $sk_L(\Phi; \Gamma) = (\Gamma'; \sigma_{\Gamma'})$, $sk_L(\Phi; \Delta)
  = (\Delta';\sigma_{\Delta'})$ and $sk_R(\Phi; F)=
  (K; \sigma_K)$.  
Let $\Phi' = (FV(\Gamma') \cup FV(\Delta') \cup FV(K)) \setminus \Phi$
and $\tau = \sigma_{\Gamma'}, \sigma_{\Delta'}, \sigma_K$.
Let $\sigma'\colon \Phi,\Phi'\rightarrow \Phi'$ be a
substitution.
%\carsten{below, the sequents in LJF don't seem to be well-formed.  We
%need to to distinguish linear from unrestricted context. Also how are
%the $F_i$ defined?}
%\eike{fixed}
\begin{enumr}
  \item
If  
$ \nega{\Gamma'}; \Delta' \skderiv K;
\sigma, \tau, \sigma'
$
then $\Gamma\sigma_{\uparrow \weak{\Phi}}; \Delta\sigma_{\uparrow
  \weak{\Phi}} \vdash F\sigma_{\uparrow
  \weak{\Phi}}$ in focused intuitionistic linear logic.
\item
If
$\Delta' = \Delta'', \downarrow K'$ and
$
\nega{\Gamma'}; \Delta'', [K'] \skderiv K;
\sigma, \tau, \sigma'
$
then $\Gamma\sigma_{\uparrow \weak{\Phi}}; \Delta\sigma_{\uparrow
  \weak{\Phi}} \vdash F\sigma_{\uparrow
  \weak{\Phi}}$ in focused intuitionistic linear logic.
\item
  If 
$\nega{\Gamma'}; \Delta' \skderiv [K];
\sigma, \tau, \sigma'$
then  $\Gamma\sigma_{\uparrow \weak{\Phi}}; \Delta\sigma_{\uparrow
  \weak{\Phi}} \vdash F\sigma_{\uparrow
  \weak{\Phi}}$ in focused intuitionistic linear logic.
\end{enumr}
\end{theorem}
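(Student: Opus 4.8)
The plan is to prove the three statements simultaneously by a mutual induction over the structure of the SLJF derivation, since the three judgment forms (inversion phase, left-focus, right-focus) feed into each other exactly as the focusing rules do. The central idea is that each SLJF rule application must be translated back into one or more LJF rule applications, where the \emph{substitution} $\sigma, \tau, \sigma'$ dictates, through the order relation $<$ of the preceding definition, at which points the suppressed quantifier rules ($\forall L$, $\exists R$, $\forall R^u$, $\exists L^u$) must be reinserted. Concretely, for each existential variable $x$ with $x\sigma = t$, the reconstructed LJF proof applies $\forall L$ or $\exists R$ instantiating with $t$; for each Eigen-variable $u$ with $u(\vec{z})/u$, the proof applies $\forall R^u$ or $\exists L^u$ at the point forced by $u < v$ for the variables $v$ appearing in $\vec{z}$. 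Admissibility of the substitution (Definition~\ref{def:adm}) guarantees that $<$ is acyclic and that no Eigen-variable leaks across the branches of $\tensor R$ or $\lolli L$, which is precisely what makes such a consistent reinsertion order exist.

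The key steps, in order, are as follows. First I would establish the leaf case: an SLJF axiom $\minus{ax}$ or $ax^+$ carries the side conditions $A\sigma = B\sigma$ and admissibility for $\Phi_1, \Phi_2$, and I would show this forces the two atoms to be instantiated to a common LJF atom once the quantifier rules recorded in $\sigma$ have been applied, yielding an LJF axiom. Second, for the purely propositional rules ($\lolli R$, $\tensor L$, $\mathrm{focus}$, $\mathrm{blur}$) the translation is a direct one-step mirror, so these are routine. Third, the branching rules $\tensor R$ and $\lolli L$ are handled by invoking Lemmas~\ref{lemma:tensorCompl} and~\ref{lemma:lolliCompl} to split the derivation, recursing on each premise, and then reassembling, \emph{inserting} the $\forall L$/$\exists R$ steps whose recorded instantiations mention the freshly introduced special variable $a_L$ or $a_R$ at exactly this juncture. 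Fourth, for $!R$ I would use Lemma~\ref{lemma:plingR}: its two conclusions say that no Eigen-variable introduced under the bang depends (via $\sigma$) on an external existential variable, and that the world-name $a$ never appears in any $x\sigma$, which is exactly the permutation guarantee needed to float all the relevant $\forall L$/$\exists R$ instantiations above the $!R$ rule so the linear context is empty when $!R$ fires. Finally, $!L$/$copy$ are managed with the Strengthening lemma (Lemma~\ref{lemma:Strengthening}) to discard modal assumptions whose constraints cannot be satisfied in a given branch.

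The main obstacle I expect is the $!R$ case together with the correct scheduling of quantifier reinsertion in the branching cases. Unlike classical Skolemisation, here the substitution encodes a partial order rather than a fixed sequence, so I must argue that \emph{some} linear extension of $<$ realisable as an LJF proof always exists; the nontrivial content is showing that the two admissibility conditions plus Lemma~\ref{lemma:plingR} jointly rule out the two pathological cyclic dependencies illustrated in Examples~\ref{ex:admissibility} and~\ref{ex:admPling} (the $\tensor$-leak of an Eigen-variable, and the $x < !R < x$ cycle). I would isolate this as the crux: at the $!R$ and $\tensor R/\lolli L$ steps, I must verify that the induction hypotheses supply subproofs whose required instantiation orders are mutually compatible, so that the combined LJF proof respects a single acyclic schedule. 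Once acyclicity of $<$ (first admissibility condition) and branch-separation of Eigen-variables (second admissibility condition) are in hand, the reassembly is forced and the remaining cases follow the template of the propositional rules.
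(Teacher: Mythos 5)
Your plan follows essentially the same route as the paper's proof: reconstruct the LJF derivation guided by the order $<$ induced by the substitution, invoke Lemmas~\ref{lemma:tensorCompl}, \ref{lemma:lolliCompl} and \ref{lemma:plingR} for the non-invertible branching and $!R$ cases and Lemma~\ref{lemma:Strengthening} for discarding modal assumptions, with the two admissibility conditions supplying acyclicity and branch separation exactly as you identify. One point to tighten: you propose a single (mutual) induction over the SLJF derivation, but the reinserted quantifier rules correspond to no step of that derivation and hence do not decrease your measure; the paper therefore runs a secondary induction over the structure of $\Delta, F$, and organises the case analysis by which elements of the set $V$ of outermost bound variables are maximal under $<$ --- invertible propositional rules first, then Eigen-variables, then a maximal existential variable, then the special variables --- which turns the ``some linear extension of $<$ is realisable'' existence claim you flag as the crux into an explicit, deterministic schedule.
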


\begin{proof}
We use firstly an induction over the derivation of 
$
\nega{\Gamma'}; \Delta' \skderiv K;
\sigma, \tau, \sigma'$ and secondly an induction over the structure of
$\Delta, F$.
Let $\Delta = F_1, \ldots, F_n$ and $\Delta' = K_1, \ldots, K_n$.
  Let $V = \{x_1,\ldots,$ $ x_k, u_1, \ldots, u_m\}$ be the set of outermost bound
variables of $\Delta', K$ (including names).
There are several cases. Firstly, if there exists a $i$ such that $1 \leq i \leq n$ and
$F_i$ is a tensor product or a formula $!N$, or $F$ is
a linear implication, we apply the corresponding inference rule and
then the induction hypothesis.

Secondly, assume there exists an Eigen-variable $u \in V$. 
%\begin{description}
% \forall R
%\item 
Assume $F = \forall u.F'$.
Hence by 
  induction hypothesis  we have $\Gamma\sigma_{\uparrow \weak{\Phi}};
  \Delta\sigma_{\uparrow   \weak{\Phi}} \vdash F'\sigma_{\uparrow
  \weak{\Phi}}$. By assumption, $u$ does not occur in $x\sigma$ for
any variable $x$ in the co-domain of $\sigma$.
Now the $\forall R$-rule yields the claim.
% \exists L
%\item
 Now assume $F = \exists u.F'$. This case is similar to $\forall R$.
%\end{description}

Thirdly, assume there exists an existential variable in $V$.  Let $x$ be an existential variable which is maximal in $V$.
%\begin{description}
%\exists R
%\item
Assume $F =  \exists x.F'$.
We show that every Eigen-variable $u$ of $x\sigma'$ is a free 
variable of $\Delta, F$. By definition, we have $x
<u$. Assume $u$ is a bound variable in $\Delta, F$. If $u$
is a bound variable of $F$, we would have $u < x$, which is a
contradiction. Hence $u$ is a bound variable of $\Delta$.
Because $u$ is not an outermost bound variable, there exists a
bound existential variable  $y$ such that $u < y$. Hence $x$ is not a maximal
bound variable.
By induction hypothesis we have
$\Gamma\sigma_{\uparrow \weak{\Phi}}; \Delta\sigma_{\uparrow
  \weak{\Phi}} \vdash F'\sigma_{\uparrow
  \weak{\Phi}}$, and now we apply the $\exists R$-rule. 
%\forall L
%\item [
  Now assume $F_1 =  \forall x.F_1'$. Similar to the $\exists R$-case.
%\end{description}

Next, assume there are no maximal first-order variables in $V$. By definition,
the special variables corresponding to the last rule applied to
the skolemised version where the
principal formula is asynchronous are now the only maximal elements in $V$. 
%By
%Lemma~\ref{lemma:maximal}, the higher-order variables associated with
%this rule are maximal in $V$. 
%\begin{description}
%\tensor R
$\tensor R$ and $\lolli L$ are direct consequences of Lemma~\ref{lemma:tensorCompl} and
Lemma~\ref{lemma:lolliCompl}, respectively. For $!R$, let $x$ be any
outermost bound variable in $\Gamma$, $\Delta$ or $K$ which is not maximal
in $V$. Because $x \not < a$, there exists a variable $y$ or $u$ in
$V$ such that $x < y$ or $x < u$, which is a contradiction.
Hence we can use the $!R$-rule of the skolemised calculus and the induction hypothesis. Finally, the axiom rule in the skolemised calculus implies $n=1$, and hence  $\Gamma\sigma_{\uparrow \weak{\Phi}};F_1\sigma_{\uparrow \weak{\Phi}} \vdash F\sigma_{\uparrow \weak{\Phi}}$.
%\end{description}
\hfill $\Box$
\end{proof}

\section{Conclusion}
\label{sec:conc}

In this paper, we revisit the technique of skolemisation and adopt it for proof search in first-order focused and polarised intuitionistic linear logic (LJF).  The central idea is to encode quantifier dependencies by constraints, and the global partial order in which quantifier rules have to be applied by a substitution.   We propose a domain specific logic called SLJF, which avoids back-tracking during proof search when variable instantiations are derived by unification.   

\emph{Related work:} Shankar~\cite{Shankar91proofsearch} first propose an adaptation of skolemisation to LJ.
Our paper can be seen as a generalisation of this work to focused and polarised linear logic.
Reis and Paleo~\cite{Reis17ifcolog} propose a technique called epsilonisation to characterise the permutability of rules in LJ.
Their approach is elegant but impractical, because it trades an exponential growth in the search space with an exponential growth in the size of the proof terms.
McLaughlin and Pfenning~\cite{conf/lpar/McLaughlinP08} propose an effective proof search technique based on the inverse method for focused and polarised intuitionistic logic.
To our knowledge, the resulting theorem prover Imogen~\cite{CADE-2009-McLaughlinP} would benefit from the presentation of skolemisation in our paper, since it requires backtracking to resolve the first-order non-determinism during proof search.
% We leave the follow-up investigation  to future work.

\emph{Applications:} There are ample of applications for skolemisation.
To our knowledge, proof search algorithms for intuitionistic or substructural logic are good at removing non-determinism from the propositional level, but don't solve the problem at the first-order level.
% Skolemisation could also improve the expressive power of resolution-based symbolic protocol analysis tools, such as ProVerif or others that in their current form are based on classical and not linear logic and do not support reasoning about stateful security protocols or avoid over-approximations.
Skolemisation can also be applied to improve intuitionistic theorem provers further, such as Imogen.
With the results in this paper we believe that we are able to achieve such results without much of a performance penalty.

\bibliographystyle{plain}
\bibliography{skolem}

\begin{thebibliography}{1}

\bibitem{andreoli1992logic}
Jean-Marc Andreoli.
\newblock Logic programming with focusing proofs in linear logic.
\newblock {\em Journal of logic and computation}, 2(3):297--347, 1992.

\bibitem{Chuck09tcs1}
Chuck Liang and Dale Miller.
\newblock Focusing and polarization in linear, intuitionistic, and classical
  logics.
\newblock {\em Theor. Comput. Sci.}, 410(46):4747–4768, nov 2009.

\bibitem{lincoln1995deciding}
PD~Lincoln.
\newblock Deciding provability of linear logic formulas.
\newblock {\em London Mathematical Society Lecture Note Series}, pages
  109--122, 1995.

\bibitem{conf/lpar/McLaughlinP08}
Sean McLaughlin and Frank Pfenning.
\newblock Imogen: Focusing the polarized inverse method for intuitionistic
  propositional logic.
\newblock In Iliano Cervesato, Helmut Veith, and Andrei Voronkov, editors, {\em
  LPAR}, volume 5330 of {\em Lecture Notes in Computer Science}, pages
  174--181. Springer, 2008.

\bibitem{CADE-2009-McLaughlinP}
Sean McLaughlin and Frank Pfenning.
\newblock {Efficient Intuitionistic Theorem Proving with the Polarized Inverse
  Method}.
\newblock In {\em {Proceedings of the 22nd International Conference on
  Automated Deduction}}, volume 5663 of {\em {Lecture Notes in Computer
  Science}}, pages 230--244. {Springer International Publishing}, 2009.

\bibitem{Pfenning2010oplss}
Frank Pfenning.
\newblock Lecture notes on focusing, lectures 1--4, June 2010.

\bibitem{Reis17ifcolog}
Giselle Reis and Bruno~Woltzenlogel Paleo.
\newblock Epsilon terms in intuitionistic sequent calculus.
\newblock {\em IFCoLog Journal of Logic and its Applications}, 4(2):402--427,
  2017.

\bibitem{Shankar91proofsearch}
N.~Shankar.
\newblock Proof search in the intuitionistic sequent calculus.
\newblock In {\em 11th International Conference on Automated Deduction}, pages
  522--536. Springer Verlag, 1991.

\end{thebibliography}
\end{document}